
\documentclass[10pt,letterpaper]{article}

\usepackage[hang,perpage]{footmisc}

\setlength\footnotemargin{10pt}
\usepackage{amsmath}
\usepackage{amsthm}
\usepackage{amssymb}
\usepackage{stmaryrd}
\usepackage{graphicx}
\usepackage{latexsym}
\usepackage{times}
\usepackage[usenames]{color}
\usepackage{braket}
\usepackage{textcomp}
\usepackage{makeidx}
\usepackage{tocbibind}
\usepackage[bookmarks=true,backref=true,colorlinks=true,linkcolor=darkolivegreen,
citecolor=darkolivegreen,urlcolor=darkolivegreen]{hyperref}
\numberwithin{equation}{subsection}
\setlength{\textwidth}{12.6cm}
\setlength{\textheight}{17.5cm}
\setlength{\oddsidemargin}{1.9cm}
\setlength{\topmargin}{1.4cm}

\theoremstyle{definition}
\newtheorem{ass}{Assumption}[section]
\newtheorem{theorem}[ass]{Theorem}
\newtheorem{lemma}[ass]{Lemma}
\newtheorem{definition}[ass]{Definition}
\newtheorem{corollary}[ass]{Corollary}

\newtheorem{ex}[ass]{Example}
\newtheorem{rem}[ass]{Remark}
\def\indexname{Index of terminology}
\makeindex

\newcommand{\captionfonts}{\footnotesize}
\makeatletter  
\long\def\@makecaption#1#2{%
  \vskip\abovecaptionskip
  \sbox\@tempboxa{{\captionfonts #1: #2}}%
  \ifdim \wd\@tempboxa >\hsize
    {\captionfonts #1: #2\par}
  \else
    \hbox to\hsize{\hfil\box\@tempboxa\hfil}%
  \fi
  \vskip\belowcaptionskip}
\makeatother   
\definecolor{darkolivegreen}{rgb}{0.333333, 0.419608, 0.1843140}
\parindent=0pt
\allowdisplaybreaks


\makeatletter
\def\printnotation{{%
\def\indexname{Index of notation}
\begin{theindex}
\@input{\jobname.ntn}
\end{theindex}
}}
\makeatother

\makeglossary

\begin{document}


\title{On the stability of the massive scalar field 
in Kerr space-time}

\author{
Horst Reinhard Beyer \\
MPI for Gravitational Physics \\
Am Muehlenberg 1 \\ 
D-14476 Golm \\
Germany \\
Email: horst@aei.mpg.de}

\date{\today}                                     

\maketitle

\begin{abstract}

The current early stage in 
the investigation of the stability of the Kerr metric is 
characterized by the study of appropriate model 
problems. Particularly interesting is the problem 
of the stability of the solutions of the Klein-Gordon equation, 
describing the propagation of a scalar field in the background of a 
rotating (Kerr-) black hole. Results suggest that the stability of the 
field depends crucially on its mass $\mu$. Among others, 
the paper provides an 
improved bound 
for $\mu$ above which 
the solutions of the reduced, by separation in the azimuth angle in 
Boyer-Lindquist coordinates, 
Klein-Gordon equation are stable.  Finally, it gives new 
formulations of the reduced equation, in particular, in form 
of a time-dependent wave 
equation that is governed by a family of 
unitarily equivalent positive 
self-adjoint operators. The latter formulation 
might turn out useful for further 
investigation. On the other hand, it is proved that from 
the abstract properties
of this family alone it cannot be concluded that the corresponding 
solutions are stable. 
\end{abstract}

\section{Introduction}
\label{introduction}

Kerr space-time is the only possible vacuum exterior 
solution of Einstein's field equations describing 
a stationary, rotating, 
uncharged black hole with non-degenerate event horizon 
\cite{robinson} and is expected to be the unique, stationary, 
asymptotically flat, vacuum space-time containing a non-degenerate 
Killing horizon \cite{alexakis}. Also, it is expected 
to be the asymptotic limit of 
the evolution of asymptotically flat vacuum data in general relativity.
\newline
\linebreak 
An important step towards
establishing the validity of these expectations is the proof of 
the stability 
of
Kerr space-time. In comparison to Schwarzschild space-time, 
where linearized stability has been proved, 
this 
problem is complicated by a lower dimensional
symmetry group and the absence of a Killing field 
that is everywhere
time-like outside the horizon. For instance, the latter is reflected 
in the fact that energy densities corresponding to the Klein-Gordon field
in a Kerr gravitational field have no definite sign. 
This absence complicates the application
of methods from operator theory and of so called  
``energy methods'' that are both employed in estimating
the decay of solutions of hyperbolic partial differential 
equations.\footnote{For the first, see, for instance, 
\cite{beyer2}. For the second, see, for instance, 
Chapter~$2$ of \cite{morawetz}.}  
\newline
\linebreak
On the other hand, two facts are worth noting. For this, note that 
in the following any reference to coordinates implicitly assumes use 
of
Boyer-Lindquist coordinates \cite{boyer}. 
\newline
\linebreak
First, in addition to its Killing vector fields that generate
one-parameter groups of symmetries (isometries), 
Kerr space-time admits a Killing tensor  \cite{walker}
that is unrelated to its symmetries.
Initiated by his groundbreaking work \cite{carter}
on the complete separability 
of the Hamilton-Jacobi equation in a Kerr 
background, 
Carter discovered that an operator that is induced by this Killing tensor 
commutes with the wave operator. On the other hand, 
Carter's operator contains a second order
time derivative \cite{carter4}. An analogous
operator has been found for the operator governing
linearized gravitational perturbations of the Kerr geometry 
\cite{kalnins}. A recent study finds another such `symmetry 
operator' which only contains a first 
order time derivative and commutes with a rescaled wave 
operator \cite{beyer3}. Differently to Carter's
operator, this operator is analogous to symmetry operators
induced by one-parameter group of isometries of the metric, in that it 
induces a mapping in the data space that is compatible with time 
evolution, and therefore describes a true symmetry of the solutions.
It is likely that an analogous operator can be found for 
a rescaling of the linearized operator governing gravitational 
perturbations of the Kerr geometry. In case of existence, it should 
facilitate the generalization to a Kerr background of the 
Regge-Wheeler-Zerilli-Moncrief (RWZM) 
decomposition of fields on a 
Schwarz\-schild background \cite{regge,zerilli,moncrief,teukolsky,
press,kegeles} which in turn should greatly simplify
the analysis of the stability of Kerr space-time. 
\newline
\linebreak
Second, there is a Killing field that is 
time-like in an open neighborhood of the event 
horizon 
given by  
\begin{equation} \label{specialkillingfield} 
\xi := \partial_{t}  +  \frac{a}{2Mr_{+}} \, 
\partial_{\varphi} \, \, , 
\end{equation} 
where $\partial_{t},\partial_{\varphi}$ are coordinate vector
fields of Boyer-Lindquist coordinates corresponding to the coordinate 
time $t$ and the azimuthal angular coordinate $\varphi$, $M > 0$
is the mass of the black hole and $a \in [0,M]$ its rotational
parameter. Moreover, if
\begin{equation} \label{inequality}
\frac{a}{M} \leqslant \frac{\sqrt{3}}{3} \, \, ,
\end{equation}
$\xi$ is time-like in the ergoregion, see 
Lemma~\ref{timelikeregion}. On the other hand,  
$\partial_{t}$ itself is space-like in the ergoregion, null 
on the stationary 
limit surface and time-like outside. For these reasons,  
at least for $a$ satisfying (\ref{inequality}), it might 
be possible 
to ``join"
energy inequalities belonging to the Killing fields 
by $\xi$ and $\partial_{t}$.
\newline
\linebreak
The discussion of the stability of the Kerr black hole is 
in its early stages. The first intermediate goal is the proof or disproof 
of its stability under ``small'' perturbations. As mentioned before, 
the linearized 
stability of the Schwarzschild metric has already been proved. 
In that 
case, by using the RWZM decomposition of fields 
in a Schwarzschild background, the question of 
the stability can be 
completely reduced to the question of the stability of the solutions 
of the wave equation on Schwarzschild space-time. For  
Kerr space-time, a similar reduction is not known. If such 
reduction exists,  
there is no guarantee that the relevant equation is the 
scalar wave equation. It is quite possible that such 
equation contains 
an additional (even positive) potential term that, similar to the 
potential term introduced by a mass of the field, could result in 
instability of the solutions. Second, an
instability of a massive scalar field in a Kerr background 
could indicate instability of the metric against perturbations 
by matter which generically has mass. If this were the case, 
even a proof of the stability 
of Kerr space-time could turn out as 
a purely mathematical exercise with little
relevance for general relativity. 
Currently, the 
main focus is the study of  
the stability of the solutions of the Klein-Gordon field on 
a Kerr background with the hope that the results lead to insight 
into the problem of linearized stability. Although the results 
of this paper also apply to the case that $\mu = 0$, its 
main focus is the case of Klein-Gordon fields 
of mass $\mu > 0$.
\newline
\linebreak
Quite differently from the case of a Schwarzschild background, the
results for these test cases suggest an
asymmetry between the cases 
$\mu = 0$ and $\mu \neq 0$. In the case of 
the wave equation, i.e., $\mu = 0$,  
results point to the stability of the solutions 
\cite{whiting,finster,dafermos,andersson,krivan1,krivan2}, 
whereas for $\mu \neq 0$, there are a number of results 
pointing in the direction of instability of the solutions 
under certain conditions \cite{damour,detweiler,zouros,furuhashi,khanna,
cardoso,hod}.
\newline
\linebreak
In particular,  
unstable modes were found by the  
numerical investigations by 
Furuhashi and Nambu for $\mu M \sim 1$ and $(a/M) = 0.98$,
by Strafuss and Khanna for $\mu M \sim 1$ 
and $(a/M) = 0.9999$ and by Cardoso and Yoshida for 
$\mu M \leqslant 1$ and $0.98 \leqslant (a/M) < 1$. The 
analytical study by Hod and Hod finds
unstable modes for $\mu M \sim 1$ with a growth rate
which is four orders of 
magnitude larger than previous estimates. On the other hand, 
\cite{beyer1} proves
that the restrictions of 
the solutions of the separated, in the azimuthal coordinate, Klein-Gordon 
field (RKG) are stable for 
\begin{equation} \label{massineq}
\mu \geqslant \frac{|m|a}{2Mr_{+}} 
\sqrt{1 + \frac{2M}{r_{+}} + \frac{a^2}{r_{+}^2}}
\, \, .
\end{equation}
Here
$m \in {\mathbb{Z}}$ is the 
`azimuthal separation parameter' and $r_{+} := M + 
\sqrt{M^2 - a^2}$. So far, this has been the only mathematically 
rigorous result on the stability of the solutions of the 
RKG for $\mu > 0$. This result 
contradicts the result of Zouros and Eardley, 
but is 
consistent with 
the other results above. In addition, there is the numerical
result by Konoplya and Zhidenko, \cite{konoplya} which confirms 
the result of Beyer, {\it but also 
finds 
no unstable modes of the RKG for $\mu M \ll 1$ 
and $\mu M \sim 1$}. 
\newline
\linebreak
Among others, this paper improves the estimate (\ref{massineq}). 
It is proved that 
the solutions of the RKG are stable for $\mu$
satisfying 
\begin{equation*} 
\mu \geqslant \frac{|m|a}{2Mr_{+}} 
\sqrt{1 + \frac{2M}{r_{+}}}
\, \, .
\end{equation*}
Further, it gives new 
formulations for RKG, in particular, in form of a time-dependent 
wave equation that  is governed by a family of unitarily equivalent 
positive 
self-adjoint operators. The latter might turn out useful in future 
investigations. On the other hand, it is proved that from 
the abstract properties
of this family alone it cannot be concluded that the corresponding 
solutions are stable. 
\newline
\linebreak 
The remainder of the paper is organized as follows.
Section~$2$ gives the geometrical setting of the 
discussion of the solutions of the RKG and a proof of the 
above mentioned property of the Killing field $\xi$. 
Section~$3$ gives 
basic properties of operators read off 
from the equation, including some new results. These properties
provide the basis for a formulation 
of the initial-value problem for the equation in Section~$4$
which is less dependent on methods from semigroups of operators 
than that of \cite{beyer1}. Section~$4$ also contains the improved result
on the stability of the solutions of RKG, a formulation of the RKG 
in terms of a time-dependent wave equation and the above mentioned 
counterexample.
Finally, the paper concludes 
with a discussion of the results and $2$ appendices that contain 
proof of results that were omitted in the main text to improve the
readability of the paper.

\section{The Geometrical Setting}
\label{geometricalsetting}

In Boyer-Lindquist coordinates\footnote{If not 
otherwise indicated, the symbols 
$t,r,\theta,\varphi$ denote coordinate projections whose 
domains will be obvious from the context. In addtion, we assume 
the composition 
of maps, which includes addition, multiplication and so forth, 
always to be maximally defined. For instance, the sum of two 
complex-valued maps is defined on the intersection of their domains.
Finally, we use 
Planck units where 
the reduced Planck constant $\hbar$, the 
speed of light in vacuum $c$, 
and the gravitational constant $\gamma$, all have the numerical 
value $1$.},
$(t,r,\theta,\varphi) : 
\Omega \rightarrow {\mathbb{R}}^4$,
the Kerr metric $g$ is given by
\begin{align*}
& g = g_{tt} \, dt \otimes dt + g_{t\varphi} ( dt \otimes d\varphi 
+ d\varphi \otimes dt) + g_{rr} \, dr \otimes dr + g_{\theta \theta} \,
 d\theta \otimes d\theta +  g_{\varphi \varphi} \, 
 d\varphi \otimes d\varphi \, \, , 
\end{align*}
where 
\begin{align*}
& g_{tt} := 1 - \frac{2Mr}{\Sigma} \, \, , \, \,
 g_{t\varphi} := \frac{2Mar\sin^2\!\theta}{\Sigma} \, \, , \, \,
g_{rr} := - \frac{\Sigma}{\Delta} \, \, , \, \, 
g_{\theta \theta} :=  -\Sigma \, \, ,  \\
& g_{\varphi \varphi} := - \frac{\Delta \overline{\Sigma}}{\Sigma} 
\sin^2\!\theta \, \, , 
\end{align*}
$M$ is the mass of the black hole, $a \in [0,M]$ is the rotational
parameter and
\begin{align*} 
& \Delta := r^2 - 2Mr +a^2 \, \, , \, \, \Sigma := r^2 + a^2 
\cos^2\!\theta \, \, , \\
& 
\overline{\Sigma} := 
\frac{(r^2+a^2) \Sigma + 2 M a^2 r \sin^2\!\theta}{\Delta} 
= \frac{(r^2+a^2)^2}{\Delta} - a^2 \sin^2\!\theta = 
\Sigma + 2 M r + \frac{4M^2r^2}{\Delta} \, \, , \\
& r_{+} := M + \sqrt{M^2-a^2} \, \, , \, \,  r_{-} := M - \sqrt{M^2-a^2} 
\, \, , \, \, \\
& \Omega := {\mathbb{R}} \times (r_{+},\infty) \times
 (0,\pi) \times (-\pi,\pi) \, \, .
\end{align*}

In these coordinates, the reduced 
Klein-Gordon equation corresponding to 
$m \in {\mathbb{Z}}$, 
governing solutions 
$\psi : \Omega \rightarrow {\mathbb{C}}$ of the form 
\begin{equation*}
\psi(t,r,\theta,\varphi) = \exp(im\varphi) \, u(t,r,\theta) \, \, ,
\end{equation*} 
where $u : 
\Omega_{s} \rightarrow {\mathbb{C}}$,
\begin{equation*}  
\Omega_{s} := (r_{+},\infty) \times (0,\pi) \, \, ,
\end{equation*}
for all  
$t \in {\mathbb{R}}$, $\varphi \in (-\pi,\pi)$, 
$(r,\theta) \in \Omega_{s}$, is given by

\begin{align} \label{kleingordonequation}
& \frac{\partial^2 u}{\partial t^2} + i b \, 
\frac{\partial u}{\partial t} + D^2_{r \theta} \, u
= 0 \, \, ,
\end{align}
where
\begin{equation*}
b := 
\frac{4mMar}{\triangle\overline{\Sigma}} = 
\frac{4mMar}{(r^2+a^2)^2 - a^2 \triangle 
\sin^2 \theta } =
 \frac{4mMar}{(r^2+a^2) \Sigma + 2 M a^2 r \sin^2\!\theta} \, \, ,
\end{equation*}
\begin{align*} \label{D2rtheta}
& D^2_{r \theta} f := \frac{1}{\overline{\Sigma}} 
\left( - \frac{\partial}{\partial r} \triangle  \frac{\partial}{\partial r}
- \frac{m^2 a^2}{\triangle} - \frac{1}{\sin \theta} \,
\frac{\partial}{\partial \theta} \sin \theta \,\frac{\partial}{\partial \theta}
+ \frac{m^2}{\sin^2 \theta} + \mu^2 \Sigma \right)f 
\end{align*} 
for every $f \in C^2(\Omega_{s},{\mathbb{C}})$ and 
$\mu \geq 0$ is the mass of the field. In particular, note that 
$b$ defines a real-valued bounded function on $\Omega_s$ which 
positive for $m \geq 0$ and negative for $m \leq 0$.
For this reason, it induces a bounded self-adjoint 
(maximal multiplication) operator $B$ on the weighted 
$L^2$-space $X$, see below, which is positive for $m \geq 0$ 
and negative for $m \leq 0$.  
Further, $D^2_{r \theta}$ 
is singular since the continuous extensions of the coeffcients 
of its highest (second) order radial derivative vanish on the horizon 
$\{r_{+}\} \times [0,\pi]$.
\newline
\linebreak
In particular, the following proves that the Killing field 
\begin{equation*} 
\xi := \partial_{t}  +  \frac{a}{2Mr_{+}} \, 
\partial_{\varphi} 
\end{equation*} 
is time-like in an open neighborhood of the event 
horizon and time-like in the ergoregion
if 
\begin{equation*} 
\frac{a}{M} \leqslant \frac{\sqrt{3}}{3} \, \, .
\end{equation*}
Proofs are given in Appendix~$1$.

\begin{lemma} \label{killingfields}
Let $M > 0, a > 0$.
For every $s \in {\mathbb{R}}$, the function 
\begin{equation*}
 g(\partial_{t} + s \, \partial_{\varphi},
\partial_{t} + s \,  \partial_{\varphi})
\end{equation*}
has a continuous extension to ${\overline{\Omega}}_{s}$. This 
extension is positive on  $\partial {\overline{\Omega}}_{s}$
if and only if 
\begin{equation*}
s = \frac{a}{2Mr_{+}} \, \, .
\end{equation*}
Further, 
\begin{equation*}
\xi := \partial_{t} + \frac{a}{2Mr_{+}} \, \partial_{\varphi}
\end{equation*}
is time-like precisely on 
\begin{equation*}
\Omega_{e2} := 
\left[2 M r_{+} - a^2 \sin^2\!\theta - a \, \Delta^{1/2} \sin \theta  
\left(1 + \frac{2M}{r-r_{-}} \right) 
\right]^{-1}(\,(0,\infty)\,) \, \, .
\end{equation*}
\end{lemma}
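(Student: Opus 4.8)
The plan is to reduce everything to the single scalar function $h_s := g(\partial_t + s\,\partial_\varphi,\partial_t+s\,\partial_\varphi) = g_{tt} + 2s\,g_{t\varphi} + s^2 g_{\varphi\varphi}$ and to analyse its sign. First I would put $h_s$ over the common denominator $\Sigma$, using the polynomial identity $\Delta\overline{\Sigma} = (r^2+a^2)^2 - a^2\Delta\sin^2\!\theta$ together with $\Sigma - 2Mr = \Delta - a^2\sin^2\!\theta$, to obtain
\begin{equation*}
\Sigma\,h_s = \Delta + \sin^2\!\theta\left(4sMar - a^2 - s^2\Delta\overline{\Sigma}\right).
\end{equation*}
The right-hand side is a polynomial in $r,\sin\theta,\cos\theta$, while $\Sigma = r^2 + a^2\cos^2\!\theta \geqslant r_+^2 > 0$ throughout $\overline{\Omega}_s$; hence $h_s$ extends continuously to $\overline{\Omega}_s$, which is the first assertion.

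For the second assertion I would evaluate this extension on $\partial\overline{\Omega}_s$, i.e. on the horizon $\{r_+\}\times[0,\pi]$ and on the axis $[r_+,\infty)\times\{0,\pi\}$. On the axis, $\sin\theta = 0$ gives $h_s = \Delta/(r^2+a^2)\geqslant 0$ independently of $s$. On the horizon $\Delta = 0$, and using $2Mr_+ = r_+^2 + a^2$ the bracket collapses to a perfect square, giving
\begin{equation*}
h_s\big|_{r=r_+} = -\frac{\sin^2\!\theta}{\Sigma}\,(a - 2sMr_+)^2 \leqslant 0,
\end{equation*}
with equality at an interior horizon point (where $\sin\theta\neq 0$) if and only if $s = a/(2Mr_+)$. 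Thus the extension takes no negative value on $\partial\overline{\Omega}_s$ precisely for $s = a/(2Mr_+)$, which is the content of the second assertion.

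For the third assertion I fix $s = a/(2Mr_+)$ and aim to factor $\Sigma\,g(\xi,\xi)$. With
\begin{equation*}
P_\pm := 2Mr_+ - a^2\sin^2\!\theta \pm a\Delta^{1/2}\sin\theta\left(1 + \frac{2M}{r-r_-}\right),
\end{equation*}
$P_-$ is exactly the function whose positivity defines $\Omega_{e2}$, and the target identity is
\begin{equation*}
\Sigma\,g(\xi,\xi) = \frac{\Delta}{4M^2r_+^2}\,P_-\,P_+ .
\end{equation*}
Using $(1 + 2M/(r-r_-))^2 = (r+r_+)^2/(r-r_-)^2$ (from $2M = r_+ + r_-$) together with $\Delta = (r-r_+)(r-r_-)$, the product $P_-P_+ = (2Mr_+ - a^2\sin^2\!\theta)^2 - a^2\Delta\sin^2\!\theta\,(1+2M/(r-r_-))^2$ is a quadratic in $\sin^2\!\theta$, and I would verify the identity by matching the three coefficients of this quadratic against those of $\Sigma\,g(\xi,\xi)$; the constant and leading coefficients match at once, while the middle coefficient reduces, after simplification, to the single relation $r_+^2 + a^2 = 2Mr_+$, i.e. $\Delta(r_+) = 0$.

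Once the factorization is in hand the sign analysis is immediate: $\Sigma > 0$ and $\Delta > 0$ on $\Omega_s$, so the sign of $g(\xi,\xi)$ equals the sign of $P_-P_+$. It then remains to note that $P_+ > 0$ on $\Omega_s$: its last summand is nonnegative, and $2Mr_+ - a^2\sin^2\!\theta \geqslant 2Mr_+ - a^2 = r_+^2 > 0$ because $r_+ \geqslant M \geqslant a$. Hence the sign of $g(\xi,\xi)$ equals the sign of $P_-$, so $\xi$ is time-like exactly where $P_- > 0$, that is, on $\Omega_{e2}$. I expect the main obstacle to be discovering the correct conjugate factor $P_+$: the expression $\Sigma\,g(\xi,\xi)$ is visibly even in $\sin\theta$, so recognizing it as a norm $P_-P_+$ with a square-root–valued radial coefficient is the non-routine step; after that, both the coefficient matching and the positivity of $P_+$ are elementary.
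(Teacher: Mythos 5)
Your proposal is correct and follows essentially the same route as the paper: clear the denominator $\Sigma$, complete the square to read off the sign on the horizon (forcing $s=a/(2Mr_{+})$), and factor $\Sigma\,g(\xi,\xi)$ as $\frac{\Delta}{4M^{2}r_{+}^{2}}\,P_{-}P_{+}$, which is exactly the difference-of-squares form the paper derives. Your only addition is to make explicit the final step $P_{+}>0$ on $\Omega_{s}$ (via $2Mr_{+}-a^{2}\sin^{2}\!\theta\geq r_{+}^{2}>0$), which the paper leaves implicit.
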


\begin{proof} See Appendix~$1$.
\end{proof}

\begin{lemma} \label{timelikeregion}
Let $M > 0, a > 0$ and $\Omega_{e1}$, defined by  
\begin{equation*}
\Omega_{e1} := (a^2 \sin^2\!\theta -\triangle)^{-1}((0,\infty))
\, \, , 
\end{equation*}
denote the ergoregion. If
\begin{equation} \label{condition}
\frac{a}{M} \leq \frac{\sqrt{3}}{3} \, \, ,
\end{equation} 
then 
\begin{equation*}
\Omega_{e1} \subset \Omega_{e2} \, \, .
\end{equation*}
\end{lemma}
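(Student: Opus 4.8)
The plan is to rewrite the function cutting out $\Omega_{e2}$ so that the defining inequality $\Delta < a^2\sin^2\!\theta$ of $\Omega_{e1}$ can be inserted directly. First I would use $r_+ + r_- = 2M$ and $r_+ r_- = a^2$, which give $2Mr_+ = r_+^2 + a^2$, together with the identities
\begin{equation*}
1 + \frac{2M}{r-r_-} = \frac{r - r_- + (r_+ + r_-)}{r-r_-} = \frac{r+r_+}{r-r_-} \, , \qquad \frac{\Delta^{1/2}}{r-r_-} = \sqrt{\frac{r-r_+}{r-r_-}} \, ,
\end{equation*}
the last because $\Delta = (r-r_+)(r-r_-)$. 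Substituting these into the bracket defining $\Omega_{e2}$ in Lemma~\ref{killingfields} and writing $\Sigma_+ := r_+^2 + a^2\cos^2\!\theta$ (that is, $\Sigma$ evaluated at $r = r_+$), it collapses to
\begin{equation*}
P := \Sigma_+ - a\sin\theta \, (r+r_+)\sqrt{\frac{r-r_+}{r-r_-}} \, ,
\end{equation*}
so that $\Omega_{e2} = P^{-1}(\,(0,\infty)\,)$ and it remains to prove $P > 0$ on $\Omega_{e1}$.

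Next I would feed in the ergoregion condition. On $\Omega_{e1}$ one has $\Delta < a^2\sin^2\!\theta$, hence $\Delta^{1/2} < a\sin\theta$ since all factors are positive on $\Omega_s$, and therefore $\sqrt{(r-r_+)/(r-r_-)} = \Delta^{1/2}/(r-r_-) < a\sin\theta/(r-r_-)$. Because $a\sin\theta\,(r+r_+)/(r-r_-) > 0$ there, this yields the strict bound $P > \Sigma_+ - a^2\sin^2\!\theta\,(r+r_+)/(r-r_-) = G/(r-r_-)$, where
\begin{equation*}
G := \Sigma_+ \, (r-r_-) - a^2\sin^2\!\theta \, (r+r_+) \, .
\end{equation*}
Since $r - r_- > 0$, it thus suffices to establish $G \geq 0$ on the closure of $\Omega_{e1}$.

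The decisive simplification is that, for fixed $\theta$, $G$ is affine in $r$: the coefficient $\Sigma_+$ is $r$-independent while $r - r_-$ and $r + r_+$ are affine. Hence on each radial segment $r_+ \leq r \leq r_{\mathrm{erg}}(\theta)$ of the closed ergoregion, with $r_{\mathrm{erg}}(\theta) := M + \sqrt{M^2 - a^2\cos^2\!\theta}$, the minimum of $G$ is taken at an endpoint, so I only need $G \geq 0$ at $r = r_+$ and at $r = r_{\mathrm{erg}}(\theta)$ for every $\theta$. Both resulting expressions are monotone in $\cos^2\!\theta$, so their minima sit at $\theta = \pi/2$. At $r = r_+$, using $r_+ - r_- = 2\sqrt{M^2-a^2}$, this worst case reduces to $r_+\sqrt{M^2-a^2} \geq a^2$. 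At $r = r_{\mathrm{erg}}(\theta)$ one substitutes $\Delta = a^2\sin^2\!\theta$ (which holds on the stationary limit surface) to collapse $G$ into $(r_{\mathrm{erg}} - r_-)\,[\,2r_+^2 + a^2\cos^2\!\theta - r_{\mathrm{erg}}^2\,]$, whose worst case reduces to $2M\sqrt{M^2-a^2} \geq a^2$. Both are comfortably implied by $a/M \leq \sqrt3/3$; indeed the second, more restrictive, one needs only $a^2/M^2 \leq 2(\sqrt2 - 1)$, which completes the argument.

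I expect the main obstacle to be the outer endpoint $r = r_{\mathrm{erg}}(\theta)$. There the crude replacement $\Delta^{1/2} \to a\sin\theta$ becomes an equality, so one must check that the bound survives in this limiting case, and the algebraic collapse of $G$ using $\Delta = a^2\sin^2\!\theta$ together with $r_{\mathrm{erg}} = M + \sqrt{M^2 - a^2\cos^2\!\theta}$ is the single genuinely delicate computation. The observation that $G$ is affine in $r$ --- reducing an inequality over a two-dimensional region to two one-parameter families verified at $\theta = \pi/2$ --- is what keeps everything else routine.
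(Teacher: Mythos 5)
Your argument is correct, and it takes a genuinely different route from the paper. The paper never touches the factorization $\Delta=(r-r_+)(r-r_-)$, $r_++r_-=2M$, $r_+r_-=a^2$; instead it works with the \emph{squared} quantity $(2Mr_+-a^2\sin^2\!\theta)^2-a^2\Delta\sin^2\!\theta\,(1+2M/(r-r_-))^2$ (legitimate since $2Mr_+-a^2\sin^2\!\theta>0$), views it as a quadratic in $a^2\sin^2\!\theta$, inserts the ergoregion inequality together with $\sin^2\!\theta\leq 1$, completes the square in $\Delta$, and finally bounds $(1+2M/(r-r_-))^4$ by its value at $r=r_+$; this chain of estimates is lossy and is what forces the hypothesis $a^2\leq\tfrac12(M^2-a^2)$, i.e.\ $a/M\leq\sqrt3/3$. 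Your route --- collapsing the defining bracket of $\Omega_{e2}$ to $P=\Sigma_+-a\sin\theta\,(r+r_+)\sqrt{(r-r_+)/(r-r_-)}$, removing the square root via $\Delta^{1/2}<a\sin\theta$ rather than by squaring, and exploiting that the resulting $G$ is affine in $r$ so only the endpoints $r=r_+$ and $r=r_{\mathrm{erg}}(\theta)$ need checking --- is cleaner and strictly sharper: your endpoint conditions $r_+\sqrt{M^2-a^2}\geq a^2$ and $2M\sqrt{M^2-a^2}\geq a^2$ are equivalent to $a^2\leq\tfrac34 M^2$ and $a^2\leq 2(\sqrt2-1)M^2$ respectively, so your method actually proves $\Omega_{e1}\subset\Omega_{e2}$ for all $a/M\leq\sqrt3/2$, improving the lemma. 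One small slip: of your two endpoint conditions the \emph{first} (inner horizon endpoint, $a^2\leq\tfrac34M^2$) is the binding one, not the second ($2(\sqrt2-1)\approx0.83>\tfrac34$); this is immaterial since both follow comfortably from $a/M\leq\sqrt3/3$.
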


\begin{proof} See Appendix~$1$.
\end{proof}

\section{Basic Properties of Operators in 
the Equation}
\label{basicproperties}

In a first step, we represent  
(\ref{kleingordonequation}) as a differential equation for 
an unknown function $u$ with values in   
a Hilbert space. For this reason, we represent 
formal operators present in 
(\ref{kleingordonequation}) as operators with well-defined 
domains in an appropriate Hilbert space and, subsequently, study basic
properties of the resulting operators. 
Theorems~\ref{core},~\ref{friedrichsextension} provide new results.  
 
\begin{definition}
In the following, $X$ denotes the weighted $L^2$-space $X$ defined by   
\begin{equation} \label{definitionofX}
X := L_{C}^2\big(\Omega_{s}\,,\overline{\Sigma}
\sin \theta\big) \, \, .
\end{equation} 
Further, $B$ is the bounded linear self-adjoint 
operator
on $X$ given by 
\begin{equation} \label{definitionofB}
B f := b f
\end{equation}
for every $f \in X$. Note that $B$ is positive for $m \geq 0$ and 
negative for $m \leq 0$.
\end{definition}

\begin{rem}
We note that, as consequence of the fact that 
$B \in L(X,X)$ is self-adjoint, the operator 
\begin{equation*}
\exp((it/2) B) \, \, , 
\end{equation*} 
where $\exp$ denotes the exponential function on 
$L(X,X)$, see, e.g., Section~3.3 in \cite{beyer2}, is unitary
for every $t \in {\mathbb{R}}$ and coincides
with the maximal multiplication operator by the function
$\exp((it/2) b)$. 
\end{rem}

\begin{definition} ({\bf Definition of $A_0$})
\begin{itemize}
\item[(i)]
We define $D(A_0)$ to consist of 
all $f \in C^2({\bar{\Omega}}_{s},{\mathbb{C}}) \cap X$ 
satisfying the conditions a), b) and c):    
\begin{itemize}
\item[a)] $D^2_{r \theta} f \in X$, 
\item[b)]
there is $R > 0$ such that $f(r,\theta) = 0$ for all 
$r > R$ and $\theta \in I_{\theta} := (0,\pi)$, 
\item[c)]  
\begin{equation*}
\lim_{r \rightarrow r_{+}} 
\frac{\partial f}{\partial \theta}(r,\theta) = 0  
\end{equation*}
for all $\theta \in I_{\theta}$.
\end{itemize}
\item[(ii)] For every $f \in D(A_0)$, 
we define 
\begin{equation*}
A_0 f := D^2_{r \theta} f \, \, . 
\end{equation*}
\end{itemize}
\end{definition}

\begin{lemma} \label{A0isself-adjoint}
$A_0$ is a densely-defined, linear, symmetric 
and essentially self-adjoint operator 
in $X$. In addition, the closure ${\bar{A}}_0$ of 
$A_0$ is semibounded with lower bound 
\begin{equation*}
\alpha := - \frac{m^2 a^2}{4 M^2 r_{+}^2} \, \, .
\end{equation*} 
\end{lemma}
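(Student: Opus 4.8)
Linearity is immediate, since $D^2_{r\theta}$ is a linear differential expression and conditions a)--c) defining $D(A_0)$ are preserved under complex linear combinations. For density I would observe that $C_c^\infty(\Omega_s)\subset D(A_0)$: a smooth function with compact support in the open set $\Omega_s$ vanishes near the horizon and for large $r$, so b) and c) hold trivially and $D^2_{r\theta}f$ is again smooth with compact support, hence in $X$; as $\overline{\Sigma}\sin\theta$ is strictly positive and locally integrable on $\Omega_s$, $C_c^\infty(\Omega_s)$, and therefore $D(A_0)$, is dense in $X$. Symmetry follows by integration by parts: because the prefactor $1/\overline{\Sigma}$ in $D^2_{r\theta}$ cancels against the weight $\overline{\Sigma}$, the relevant measure is $\sin\theta\,dr\,d\theta$. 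The radial term $-\partial_r\Delta\partial_r$ produces a boundary contribution carrying the factor $\Delta$, which vanishes at $r=r_+$ since $\Delta(r_+)=0$ and at $r\to\infty$ by b); the angular term produces a contribution carrying $\sin\theta$, which vanishes at $\theta=0,\pi$ since $f,g\in C^2(\overline{\Omega}_s)$; the remaining terms are multiplications by real functions. Hence $\langle A_0 f,g\rangle=\langle f,A_0 g\rangle$.

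For semiboundedness the same computation gives, for $f\in D(A_0)$,
\[ \langle A_0 f,f\rangle=\int_{\Omega_s}\left[\Delta\,|\partial_r f|^2+|\partial_\theta f|^2+\left(-\frac{m^2a^2}{\Delta}+\frac{m^2}{\sin^2\theta}+\mu^2\Sigma\right)|f|^2\right]\sin\theta\,dr\,d\theta . \]
The kinetic terms are non-negative on $\Omega_s$, so it suffices to prove the pointwise inequality $-\frac{m^2a^2}{\Delta}+\frac{m^2}{\sin^2\theta}+\mu^2\Sigma\geq\alpha\,\overline{\Sigma}$. The crucial input is the horizon relation $r_+^2+a^2=2Mr_+$ (equivalently $r_++r_-=2M$, $r_+r_-=a^2$), which forces $(r^2+a^2)^2-4M^2r_+^2$ to vanish at $r_+$ and to factor through $\Delta$. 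Using this together with $\alpha=-m^2\big(a/(2Mr_+)\big)^2$ I would establish
\[ -\frac{m^2a^2}{\Delta}-\alpha\,\overline{\Sigma}=\frac{m^2a^2}{4M^2r_+^2}\left(1+\frac{2M}{r-r_-}\right)\left[(r^2+a^2)+2Mr_+\right]-\frac{m^2a^4\sin^2\theta}{4M^2r_+^2}, \]
whose first term is manifestly non-negative for $r>r_+$. The negative remainder is absorbed by $m^2/\sin^2\theta$, since $1/\sin^2\theta\geq 1>a^4/(4M^2r_+^2)\geq a^4\sin^2\theta/(4M^2r_+^2)$ by $a\leq M\leq r_+$, while $\mu^2\Sigma\geq 0$. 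This yields the pointwise bound and hence $\langle A_0 f,f\rangle\geq\alpha\,\|f\|^2$, so $\overline{A}_0$ is semibounded with lower bound $\alpha$.

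For essential self-adjointness I would use that a symmetric operator bounded below by $\alpha$ is essentially self-adjoint iff $\mathrm{Ran}(A_0-\lambda)$ is dense for a single $\lambda<\alpha$; equivalently, the only $g\in X$ with $(D^2_{r\theta}-\lambda)g=0$ weakly is $g=0$. By interior elliptic regularity such $g$ is smooth on $\Omega_s$, and the equation $Lg=\lambda\overline{\Sigma}g$ separates: since $\mu^2\Sigma=\mu^2r^2+\mu^2a^2\cos^2\theta$ and $\lambda\overline{\Sigma}=\lambda(r^2+a^2)^2/\Delta-\lambda a^2\sin^2\theta$ split additively, it takes the form $Pg+Q_\lambda g=\lambda\,\frac{(r^2+a^2)^2}{\Delta}\,g$, where $P:=-\partial_r\Delta\partial_r-m^2a^2/\Delta+\mu^2r^2$ acts only in $r$ and $Q_\lambda:=-\frac{1}{\sin\theta}\partial_\theta\sin\theta\partial_\theta+\frac{m^2}{\sin^2\theta}+\mu^2a^2\cos^2\theta+\lambda a^2\sin^2\theta$ is a singular Sturm--Liouville operator in $\theta$ on $L^2((0,\pi),\sin\theta)$ with discrete spectrum and a complete orthonormal eigenbasis $\{S_n\}$, $Q_\lambda S_n=\nu_n S_n$. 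Expanding $g(r,\cdot)=\sum_n R_n(r)S_n$ reduces the equation to the radial relations $(P+\nu_n)R_n=\lambda\,\frac{(r^2+a^2)^2}{\Delta}\,R_n$, each $R_n$ lying in a weighted $L^2$-space whose weight is two-sidedly comparable to $(r^2+a^2)^2/\Delta$ (because $\overline{\Sigma}$ depends on $\theta$ only through the bounded term $-a^2\sin^2\theta$). At both endpoints the radial problem is of limit-point type --- at $r\to\infty$ because the potential grows like $(\mu^2-\lambda)r^2$ with $\mu^2-\lambda>0$, and at the regular singular point $r_+$ because the indicial exponents, for the weight blowing up like $(r-r_+)^{-1}$, leave at most one square-integrable solution --- so the boundary form of the radial integration by parts vanishes for $R_n$. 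Feeding that integration by parts into the pointwise bound of the previous step, applied to the product $R_nS_n$, yields $(\lambda-\alpha)\int(\cdots)|R_n|^2\,dr\geq 0$ with a strictly positive weight; since $\lambda<\alpha$ this forces $R_n=0$ for every $n$, whence $g=0$.

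The main obstacle is the endpoint analysis of the radial problem, in particular the limit-point classification at the horizon, where $\Delta$ vanishes and the term $-(m^2a^2+\lambda(r^2+a^2)^2)/\Delta$ is strongly singular, so that one must rule out a nontrivial solution that is square-integrable against the weight $\sim(r-r_+)^{-1}$ blowing up there. A secondary difficulty, which must be treated directly rather than by a clean tensor-product argument, is that $\overline{\Sigma}$ does not factor into an $r$-weight times a $\theta$-weight; the modal reduction and the transfer of the lower bound $\alpha$ to each radial mode therefore rely on the fact that the $\theta$-dependence of $\overline{\Sigma}$ enters only through the bounded perturbation $-a^2\sin^2\theta$.
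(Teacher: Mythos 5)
Your argument is essentially correct, but it is worth noting that the paper itself offers no proof of this lemma: it simply cites Lemma~2 and Theorem~4 of \cite{beyer3}. From the proof of Theorem~\ref{core} one can read off what that reference actually does: it passes to the auxiliary space ${\bar{X}} = L^2(\Omega_s,(r^4/\Delta)\sin\theta)$, whose weight \emph{does} factor into a radial times an angular part, writes $A_0 = T_{r^4/(\Delta\overline{\Sigma})}H$ with $H$ symmetric and semibounded in ${\bar{X}}$, and proves density of $(H-\lambda)D$ on the explicit core $D$ spanned by products $f\otimes(P^m_l\circ\cos)$, i.e.\ it uses associated Legendre functions and treats the leftover angular terms as perturbations. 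You instead stay in $X$, absorb the non-product weight $\overline{\Sigma}$ by the two-sided comparison with $(r^2+a^2)^2/\Delta$, and diagonalize the \emph{full} angular operator $Q_\lambda$ by spheroidal eigenfunctions. Your route buys a sharper modal reduction (each radial equation is exact, not perturbative) at the price of having to control the $\theta$-dependence of the weight by hand; the reference's route buys a genuine tensor-product structure at the price of an extra similarity transformation. Your pointwise identity $-m^2a^2/\Delta-\alpha\overline{\Sigma}=\frac{m^2a^2}{4M^2r_+^2}\bigl(1+\frac{2M}{r-r_-}\bigr)\bigl[(r^2+a^2)+2Mr_+\bigr]-\frac{m^2a^4\sin^2\theta}{4M^2r_+^2}$ is verified by $r_+^2+a^2=2Mr_+$ and $r_++r_-=2M$, and the absorption by $m^2/\sin^2\theta$ uses $a\leq M\leq r_+$ correctly, so the lower bound $\alpha$ is established with the exact constant.

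Two points deserve tightening. First, your integration by parts tacitly uses that elements of $D(A_0)$ vanish on the horizon (and, for $m\neq 0$, at the poles): this is forced by membership in $X$ together with $C^2(\overline{\Omega}_s)$, since $\overline{\Sigma}\sin\theta\sim 4M^2r_+^2/\Delta$ is non-integrable at $r_+$, and it is exactly what makes $\int(m^2a^2/\Delta)|f|^2\sin\theta$ and $\int(m^2/\sin^2\theta)|f|^2\sin\theta$ finite; without saying this the term-by-term splitting of $\braket{f|A_0f}$ is not justified. Second, the limit-point property at an endpoint yields vanishing of the \emph{Wronskian} form $\Delta(u\bar v'-u'\bar v)$ on the maximal domain, not directly of the Dirichlet boundary term $\Delta\bar R_nR_n'$ that appears in your energy identity. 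The clean repair is to use limit-point only to conclude that the maximal radial operator equals the closure of its restriction to $C^\infty_0((r_+,\infty))$; on compactly supported functions your pointwise bound gives $T_0-\lambda\geq(\alpha-\lambda)c>0$ with no boundary terms at all, hence $\lambda$ lies in the resolvent set of the closure and $\ker(T_{\max}-\lambda)=\{0\}$, which is what kills each $R_n$. With that substitution your proof is complete.
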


\begin{proof}
See Lemma~$2$ and Theorem~$4$ in \cite{beyer3}.
\end{proof}

\begin{theorem} \label{core}
The span, $D$, of all products 
\begin{equation*}
f \otimes 
(P^{m}_{l} \circ \cos) \, \, ,
\end{equation*}
where $f \in C^{2}_{0}((r_{+},\infty),{\mathbb{C}})$
and $P^{m}_{l} : (-1,1) \rightarrow {\mathbb{R}}$
is the generalized Legendre polynomial corresponding 
to $m \in {\mathbb{Z}}$ and $l \in \{|m|, |m|+1, \dots\}$,  
is a core for ${\bar{A}}_0$.
\end{theorem}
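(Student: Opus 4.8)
The plan is to exploit Lemma~\ref{A0isself-adjoint}: since $A_0$ is essentially self-adjoint, $D(A_0)$ is already a core for $\bar{A}_0$, so it suffices to prove that $D$ is dense in $D(A_0)$ for the graph norm $g\mapsto\|g\|_X+\|A_0g\|_X$; indeed, this gives $\overline{A_0|_D}=\overline{A_0}=\bar{A}_0$ because $A_0|_D\subseteq A_0$. I would therefore show that every $g\in D(A_0)$ is a graph-norm limit of elements of $D$, in two independent steps: a radial cut-off pushing the support away from the horizon, then a truncation of the angular expansion in associated Legendre functions. Throughout I would work with the non-singular operator $\overline{\Sigma}\,D^2_{r\theta}$ and use the algebraic splitting
\begin{equation*}
\begin{aligned}
\overline{\Sigma}\,D^2_{r\theta}&=T_r+T_\theta,\\
T_r&:=-\partial_r\Delta\partial_r-\frac{m^2a^2}{\Delta}+\mu^2r^2,\\
T_\theta&:=-\frac{1}{\sin\theta}\,\partial_\theta\sin\theta\,\partial_\theta+\frac{m^2}{\sin^2\theta}+\mu^2a^2\cos^2\theta,
\end{aligned}
\end{equation*}
in which $T_r$ acts only in $r$ and $T_\theta$ only in $\theta$; note that $A_0h=\overline{\Sigma}^{-1}(T_r+T_\theta)h$, so $\|A_0h\|_X^2=\int|(T_r+T_\theta)h|^2\,\overline{\Sigma}^{-1}\sin\theta\,dr\,d\theta$.

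For the first step, given $g\in D(A_0)$ (which already vanishes for large $r$ by condition b), I would multiply by a smooth cut-off $\chi_n$ that vanishes near $\{r=r_+\}$ and equals $1$ away from it, and estimate the commutator $[T_r,\chi_n]g=-\Delta'\chi_n'\,g-\Delta\chi_n''\,g-2\Delta\chi_n'\,\partial_rg$ (the potential terms, being multiplications, drop out). I expect \emph{this to be the main obstacle}: since near the horizon $\overline{\Sigma}^{-1}$ vanishes like $\Delta\sim(r-r_+)$, a naive linear cut-off of width $\delta$ makes $\int\Delta|\chi_n'|^2\,dr$ and $\int\Delta^3|\chi_n''|^2\,dr$ of order one, so the commutator does not tend to zero in $X$. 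The resolution is that $r_+$ is a limit-point endpoint of $-\partial_r\Delta\partial_r$, because $\int_{r_+}\!dr/\Delta=\infty$, which permits a logarithmic cut-off: choosing $\chi_n$ to interpolate from $0$ to $1$ on $\{n^{-2}\le r-r_+\le n^{-1}\}$ with $|\chi_n'|\lesssim(r-r_+)^{-1}/\ln n$ yields $\int\Delta|\chi_n'|^2\,dr,\ \int\Delta^3|\chi_n''|^2\,dr=O(1/\ln n)\to0$, the remaining term being even smaller. Hence $\chi_ng\to g$ in graph norm, reducing to the case in which $g$ is supported in some $[r_1,r_2]\Subset(r_+,\infty)$.

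For the second step I would use that the normalized associated Legendre functions $Y_l:=c_l\,(P^m_l\circ\cos)$, $l\ge|m|$, form a complete orthonormal system of $L^2((0,\pi),\sin\theta\,d\theta)$ and are the eigenfunctions of $L:=-\frac{1}{\sin\theta}\partial_\theta\sin\theta\partial_\theta+\frac{m^2}{\sin^2\theta}$ with eigenvalue $l(l+1)$. The coefficients $c_l(r):=\int_0^\pi g(r,\theta)\,\overline{Y_l(\theta)}\,\sin\theta\,d\theta$ lie in $C^2_0((r_+,\infty),\mathbb{C})$ because $g$ is $C^2$ and compactly supported in $r$, so the truncations $g_N:=\sum_{l=|m|}^{|m|+N}c_l\otimes Y_l$ belong to $D$. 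Convergence $g_N\to g$ in $X$ follows from completeness together with the boundedness above and below of the weight $\overline{\Sigma}\sin\theta$ on the compact support. For the graph norm I would write $(T_r+T_\theta)(g-g_N)=(T_r+L)(g-g_N)+\mu^2a^2\cos^2\theta\,(g-g_N)$: the angular projection $\Pi_N$ commutes with $T_r$ (radial) and with $L$ (diagonal on the $Y_l$), so $(T_r+L)(g-g_N)=(I-\Pi_N)(T_r+L)g\to0$ by completeness applied to $(T_r+L)g\in L^2_{\mathrm{loc}}$, while the multiplication term is controlled by $\|\cos^2\theta\,(g-g_N)\|\le\|g-g_N\|\to0$. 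Since $\overline{\Sigma}^{-1}$ is bounded on the compact support, this gives $\|A_0(g-g_N)\|_X\to0$. Combining the two steps shows that $D$ is graph-dense in $D(A_0)$, hence a core for $\bar{A}_0$.
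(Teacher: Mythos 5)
Your argument is correct in outline, but it takes a genuinely different route from the paper. The paper uses no cut-offs and no angular truncation: it imports from the proof of Theorem~4 of \cite{beyer3} the facts that $A_0 = T_{r^4/(\Delta\overline{\Sigma})}\,H$ for a boundedly invertible multiplication operator, that $H$ is essentially self-adjoint and semibounded in the equivalent space ${\bar X}=L^2(\Omega_s,(r^4/\Delta)\sin\theta)$, and --- crucially --- that $(H-\lambda)D$ is \emph{already known} to be dense in ${\bar X}$ for $\lambda$ below the lower bound of $H$; the core property then follows from the abstract observation that if $(H-\lambda)D$ is dense and ${\bar H}-\lambda$ has a bounded inverse, every $f\in D({\bar H})$ is a graph limit of elements of $D$, and the conclusion transports back to ${\bar A}_0$ through the bounded multiplication operator. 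Your proof is instead a direct, self-contained verification of graph-norm density in $D(A_0)$: a logarithmic cut-off exploiting $\int_{r_+}dr/\Delta=\infty$ (the limit-point character of the horizon endpoint) to kill the commutator $[T_r,\chi_n]$, followed by truncation of the associated-Legendre expansion using that $\Pi_N$ commutes with $T_r$ and diagonalizes $L$. The paper's route buys brevity, with all the analysis outsourced to \cite{beyer3}; yours buys independence from that reference and an explanation of \emph{why} the horizon does not obstruct (limit point, hence no boundary data to match), at the price of several regularity verifications you pass over quickly: that elements of $D(A_0)$, being $C^2$ up to ${\bar\Omega}_s$ with $D^2_{r\theta}g\in X$, behave well enough at $\theta=0,\pi$ for the integration by parts behind $\Pi_N L g=L\,\Pi_N g$ and for the absolute convergence of $\int_0^\pi m^2 g\,\overline{Y_l}\,\sin^{-1}\theta\,d\theta$; that differentiation under the integral sign gives $c_l\in C^2_0$; and that $(T_r+L)g$ is globally (not merely locally) square integrable on the compact-$r$-support region, which is what the completeness argument actually requires. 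These are routine but should be recorded. Your key estimates do check out against the weight $\overline{\Sigma}^{-1}\sim\Delta/(r_+^2+a^2)^2$ near the horizon: the term $-\Delta'\chi_n' g$ is indeed the one that defeats a cut-off of width $\delta$, and the logarithmic choice gives $\int\Delta|\chi_n'|^2\,dr=O(1/\ln n)$ and $\int\Delta^3|\chi_n''|^2\,dr=O(1/\ln n)$ as claimed.
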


\begin{proof}
For this, we use the notation of \cite{beyer3}. According to 
the proof 
of Theorem~4 of \cite{beyer3}, the underlying 
sets of $X$ and ${\bar{X}} := L^2({\Omega}_s,(r^4/\Delta)\sin \theta))$ are equal; 
and the norms induced on the common set are equivalent, 
the maximal multiplication operator $T_{r^4/(\Delta \overline{\Sigma})}$
by the function $r^4 / (\Delta \overline{\Sigma})$ is a bijective 
bounded linear operator on $X$ that has a bounded linear inverse;
the operator $H$, related to $A_{0}$ by  
\begin{equation} \label{relationA0andH}
A_0 = T_{r^4/(\Delta \overline{\Sigma})} H \, \, ,
\end{equation}
is a densely-defined, linear, symmetric, semibounded and 
essentially self-adjoint 
operator in ${\bar{X}}$, and $D$  is contained in the 
(coinciding) domains of $A_0$ and $H$. Further, it has been 
shown that $(H - \lambda)D$
is dense in ${\bar{X}}$ for 
$\lambda < \beta$, where $\beta := - m^2 a^2/r_{+}^4$ 
is a lower bound 
for $H$. From this follows that $D$ is a core  
for the closure ${\bar{H}}$ of $H$.  
For the proof, let $f \in D(\bar{H})$.
Since $(H - \lambda)D$ is dense in ${\bar{X}}$, there is a sequence 
$f_1,f_2,\dots$ in $D$ such that 
\begin{equation*}
\lim_{\nu \rightarrow \infty} (H - \lambda) f_{\nu} = 
({\bar{H}} - \lambda) f
\, \, .
\end{equation*}
Since ${\bar{H}} - \lambda$ is bijective with a bounded inverse, the 
latter 
implies that $f_1,f_2,\dots$ is convergent to $f$ and also that 
\begin{equation*}
\lim_{\nu \rightarrow \infty} H f_{\nu} = 
{\bar{H}} f \, \, .
\end{equation*} 
Hence, we conclude that ${\bar{H}}$ coincides with the closure 
of $H|_{D}$. Since $T_{r^4/(\Delta \overline{\Sigma})}, 
T_{r^4/(\Delta \overline{\Sigma})}^{-1} \in L(X,X)$, from the latter 
also follows that ${\bar{A}}_{0}$ coincides with the closure 
of $A_{0}|_{D}$. 
\end{proof}

\begin{theorem} \label{friedrichsextension}
The operator ${\bar{A}}_{0}$ coincides with the Friedrichs extension 
of the restriction of $A_0$ to $C^{\infty}_{0}({\Omega}_s,{\mathbb{C}})$.
\end{theorem}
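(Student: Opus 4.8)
The plan is to show that the Friedrichs extension $A_F$ of $A_{0}|_{C^{\infty}_{0}}$ and the operator $\bar{A}_{0}$ are both self-adjoint extensions of the same essentially self-adjoint operator, and hence must coincide. The key leverage is Theorem~\ref{core}, which tells us that $\bar{A}_{0}$ is already the unique self-adjoint closure of $A_{0}$ restricted to the explicit core $D$. So the essential point is to connect the Friedrichs construction to this uniqueness.

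First I would recall the definition of the Friedrichs extension. Writing $A_{1} := A_{0}|_{C^{\infty}_{0}({\Omega}_s,{\mathbb{C}})}$, this is a densely-defined symmetric operator semibounded below (with lower bound $\alpha$, inherited from Lemma~\ref{A0isself-adjoint} since $C^{\infty}_{0} \subset D(A_0)$). The Friedrichs extension $A_F$ is the self-adjoint operator associated with the closure of the quadratic form $f \mapsto \langle A_{1} f, f \rangle$ on the form domain obtained by completing $C^{\infty}_{0}$ in the graph norm $\|f\|_{+1}^2 := \langle (A_{1} - \alpha + 1) f, f \rangle$. By its standard characterization, $A_F$ is the unique self-adjoint extension of $A_{1}$ whose domain lies inside this form-completion, and it shares the lower bound $\alpha$.

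The heart of the argument is to show that $\bar{A}_{0}$ is itself an extension of $A_{1}$ whose domain is contained in the Friedrichs form domain, for then uniqueness of the Friedrichs extension forces $\bar{A}_{0} = A_F$. Since $A_{1} \subset A_0 \subset \bar{A}_{0}$ and $\bar{A}_{0}$ is self-adjoint by Lemma~\ref{A0isself-adjoint}, the only substantive task is the form-domain containment. For this I would use Theorem~\ref{core}: every $f \in D(\bar{A}_{0})$ is a graph-norm limit of elements of $D$, that is, of finite sums $f_{\nu} = \sum_{l} g_{\nu,l} \otimes (P^{m}_{l} \circ \cos)$ with $g_{\nu,l} \in C^{2}_{0}((r_{+},\infty),{\mathbb{C}})$. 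I would then verify that each such product lies in the Friedrichs form domain --- the crucial input being the radial boundary behaviour near $r_{+}$, where the second-order radial coefficient $\Delta$ degenerates. Because the $g_{\nu,l}$ have compact support bounded away from the horizon and from infinity, each $g_{\nu,l} \otimes (P^{m}_{l}\circ\cos)$ can be approximated in the graph norm by genuine $C^{\infty}_{0}({\Omega}_s)$ functions (smoothing the radial factor and noting $P^{m}_{l}\circ\cos$ is smooth on the open interval $(0,\pi)$), so $D$ lies in the graph-norm closure of $C^{\infty}_{0}$ and therefore in the Friedrichs form domain. Graph-norm convergence controls the form norm (the two differ only by the harmless shift and the lower bound), so $D(\bar{A}_{0})$ is contained in the form domain and the two extensions agree.

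I expect the main obstacle to be the verification near the degenerate horizon boundary $\{r_{+}\}\times[0,\pi]$: one must confirm that approximating the compactly-supported (away from $r_{+}$) radial factors by smooth functions, and assembling the finitely many angular modes, converges in the form norm rather than merely pointwise, and that the singular coefficient $m^2 a^2/\Delta$ in $D^2_{r\theta}$ causes no trouble precisely because the supports avoid $r_{+}$. The boundary condition c) in the definition of $D(A_0)$ and the compact-support condition b) are exactly what make this degeneracy harmless on $D$, so the careful point is to track these conditions through the two limits (smoothing, then the core approximation of Theorem~\ref{core}) and confirm that the Friedrichs form completion is reached.
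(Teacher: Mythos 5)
Your overall architecture is sound and close in spirit to the paper's: both arguments reduce the theorem to showing that the span $D$ of the products $g\otimes(P^{m}_{l}\circ\cos)$ is suitably related to the Friedrichs extension $A_F$ of $A_0|_{C^{\infty}_{0}(\Omega_s,{\mathbb{C}})}$, and then exploit Theorem~\ref{core} together with self-adjointness of ${\bar{A}}_0$ (the paper shows $A_F\supset A_0|_{D}$, hence $A_F\supset{\bar{A}}_0$ by closedness and the core property, and concludes $A_F={\bar{A}}_0$ because a self-adjoint operator admits no proper symmetric extension; you instead place $D({\bar{A}}_0)$ inside the form domain and invoke the uniqueness characterization of the Friedrichs extension --- both endgames are valid, and your Cauchy--Schwarz observation that graph-norm convergence controls the form norm is correct).

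The gap is in the one step you try to prove from scratch, namely that each $g\otimes(P^{m}_{l}\circ\cos)$ lies in the form closure of $C^{\infty}_{0}(\Omega_s,{\mathbb{C}})$. You identify the degenerate horizon boundary $\{r_+\}\times[0,\pi]$ as the main obstacle, but that boundary is harmless for exactly the reason you state (the radial factors are supported away from $r_+$ and from infinity). The genuinely delicate boundary is the angular one, $\theta\in\{0,\pi\}$: the factor $P^{m}_{l}\circ\cos$ is smooth on the open interval but is \emph{not} compactly supported in $(0,\pi)$, so smoothness alone does not yield approximants in $C^{\infty}_{0}(\Omega_s,{\mathbb{C}})$. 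One must insert a cutoff $\chi_{\varepsilon}(\theta)$ vanishing near the poles and then verify that $\chi_{\varepsilon}\cdot(P^{m}_{l}\circ\cos)\to P^{m}_{l}\circ\cos$ in the form norm of the angular part $-\frac{1}{\sin\theta}\partial_{\theta}\sin\theta\,\partial_{\theta}+\frac{m^2}{\sin^2\theta}$ with respect to the degenerating measure $\sin\theta\,d\theta$; the singular potential $m^2/\sin^2\theta$ and the derivative of the cutoff both contribute error terms that must be estimated (using, e.g., the vanishing $P^{m}_{l}(\cos\theta)=O(\sin^{|m|}\theta)$ for $m\neq 0$). Your proposal asserts the approximation without this analysis, and as written the justification ``$P^{m}_{l}\circ\cos$ is smooth on $(0,\pi)$'' does not suffice. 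The paper avoids this entirely by importing the needed fact --- that $D\subset D(A_F)$ with $A_F f=A_0 f$ on $D$ --- from Theorem~3 of \cite{beyer1}; either cite that result as the paper does, or supply the cutoff estimate at the poles.
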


\begin{proof}
As a consequence of Theorem~3 in \cite{beyer1}, it follows that $D$
is contained in the domain of the Friedrichs extension $A_{F}$
of the restriction of $A_0$ to 
$C^{\infty}_{0}({\Omega}_s,{\mathbb{C}})$ and that $A_{F} f = 
A_{0} f$ for every $f \in D$. In this connection, note that 
the addition of a multiple of the identity operator `does not affect'
the Friedrichs extension of an operator.\footnote{I.e., if $A$
is a densely-defined, linear, symmetric and semibounded operator in some 
Hilbert space $X$ and $\gamma \in {\mathbb{R}}$, then the 
Friedrichs extension of $A + \gamma$, $(A + \gamma)_{F}$, and 
the sum of the Friedrichs extension of $A$, $A_F$, and $\gamma$
coincide, $(A + \gamma)_{F} = A_{F} + \gamma$.} Since $D$ is a core for 
${\bar{A}}_{0}$, from this follows that $A_{F} \supset {\bar{A}}_{0}$ 
and hence, since $A_{F}$ is in particular symmetric and 
${\bar{A}}_{0}$ is self-adjoint, that $A_{F} = {\bar{A}}_{0}$. 
\end{proof}

\begin{lemma} \label{A0barplusonefourthofBsquaredispositive}
\begin{equation*}
{\cal A} := {\bar{A}}_0 + (1/4) \, B^2
\end{equation*}
is a densely-defined, linear and positive self-adjoint operator 
in $X$.
\end{lemma}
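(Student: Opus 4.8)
The plan is to establish the three claimed properties of $\mathcal{A} := \bar{A}_0 + (1/4)B^2$ in turn: that it is densely-defined and linear, that it is symmetric, and that it is self-adjoint with positive spectrum. Since $B \in L(X,X)$ is everywhere-defined and bounded, the domain of $\mathcal{A}$ coincides with $D(\bar{A}_0)$, which is dense in $X$ by Lemma~\ref{A0isself-adjoint}; linearity is immediate from the linearity of both summands. The main structural tool I would invoke is the Kato–Rellich theorem: a relatively bounded symmetric perturbation of a self-adjoint operator, with relative bound strictly less than $1$, preserves self-adjointness. Here $\bar{A}_0$ is self-adjoint (Lemma~\ref{A0isself-adjoint}), and the perturbation $(1/4)B^2$ is bounded and self-adjoint (being the square of the bounded self-adjoint operator $B$), so its relative bound is $0 < 1$. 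Hence $\mathcal{A}$ is self-adjoint on $D(\bar{A}_0)$.

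\textbf{Positivity.} For the spectral-positivity claim, I would combine the lower bound on $\bar{A}_0$ with the lower bound on $(1/4)B^2$. By Lemma~\ref{A0isself-adjoint}, $\bar{A}_0 \geqslant \alpha$ with $\alpha = -m^2 a^2/(4M^2 r_+^2)$. For the perturbation, I would use that $B$ is the maximal multiplication operator by the real bounded function $b$, so $(1/4)B^2$ is multiplication by $(1/4)b^2 \geqslant 0$; in particular it is a positive operator. The key quantitative observation is that the pointwise infimum of $b$ on $\Omega_s$ controls $\alpha$: since
\begin{equation*}
b = \frac{4mMar}{(r^2+a^2)^2 - a^2 \Delta \sin^2\theta} \, \, ,
\end{equation*}
one checks that $|b|$ attains its supremum in the limit $r \to r_+$, $\theta \to 0$ (where $\Delta \to 0$ and the denominator tends to $(r_+^2+a^2)^2 = 4M^2 r_+^2$, using $r_+ r_- = a^2$ and $r_+ + r_- = 2M$), giving $\sup|b| = |m|a/(2Mr_+)$, whence $(1/4)b^2$ has the same supremum-square but, more usefully, $\alpha = -(1/4)(\sup|b|)^2$. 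Thus for the quadratic form I would estimate, for $f \in D(\bar{A}_0)$,
\begin{equation*}
\langle f, \mathcal{A} f \rangle = \langle f, \bar{A}_0 f \rangle + \tfrac{1}{4} \langle f, B^2 f \rangle \geqslant \alpha \|f\|^2 + \tfrac{1}{4}\langle f, b^2 f\rangle \, \, .
\end{equation*}

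\textbf{The hard part} will be upgrading this to the assertion that $\mathcal{A}$ is \emph{positive} (i.e. $\langle f, \mathcal{A}f\rangle \geqslant 0$, or even strictly so on the relevant sense), because the naive sum of lower bounds gives $\alpha + 0 = \alpha < 0$, which is not enough. The resolution I would pursue is that the two lower bounds are not attained at the same point: $\bar{A}_0$ comes close to its bound $\alpha$ only where $|b|$ is close to its supremum $2|\alpha|^{1/2}$, so the negative contribution of $\bar{A}_0$ is precisely compensated by the positive multiplication term $(1/4)b^2$ in the same spatial region. Making this rigorous requires reexamining the estimate that produced $\alpha$ in \cite{beyer3} and tracking the weight $b^2$ through it, rather than bounding the two terms separately; alternatively, one shows directly that $\bar{A}_0 + (1/4)B^2 \geqslant 0$ by proving that the formal differential expression $D^2_{r\theta} + (1/4)b^2$ has a nonnegative Friedrichs quadratic form, which by Theorem~\ref{friedrichsextension} represents $\mathcal{A}$. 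This pointwise cancellation is the substantive content of the lemma, and I would expect it to rest on the identity $\alpha = -(1/4)\sup b^2$ together with a refined radial estimate showing the indicial/boundary term at $r_+$ is exactly cancelled by the added potential.
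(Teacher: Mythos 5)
Your treatment of the first half of the lemma is fine and matches the paper: $(1/4)B^2$ is a bounded self-adjoint perturbation of the self-adjoint $\bar{A}_0$, so Kato--Rellich gives self-adjointness on $D(\bar{A}_0)$, and density and linearity are immediate. The problem is the positivity claim, where you correctly diagnose that separate lower bounds ($\alpha+0<0$) are useless and that a pointwise cancellation between the negative part of the potential and $(1/4)b^2$ must be exhibited --- and then you stop, announcing that ``making this rigorous requires reexamining the estimate that produced $\alpha$'' and that you ``would expect'' the cancellation to rest on $\alpha=-(1/4)\sup b^2$ and a boundary cancellation at $r_+$. That expectation is never verified, so the substantive content of the lemma (by your own admission) is missing from the proposal. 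The paper closes exactly this gap by a direct algebraic identity, valid pointwise on all of $\Omega_s$ and not merely near the horizon:
\begin{equation*}
\frac{1}{\overline{\Sigma}}\left(-\frac{m^2a^2}{\triangle}+\frac{m^2}{\sin^2\theta}\right)+\frac{1}{4}\,b^2
=\frac{m^2\,\Sigma^2}{\triangle\,\overline{\Sigma}^{\,2}\sin^2\theta}\;\geq\;0\,\,,
\end{equation*}
after which positivity of $\braket{f|{\cal A}f}$ on the core $D(A_0)$ follows because the remaining pieces of $D^2_{r\theta}$ (the two kinetic terms and $\mu^2\Sigma/\overline{\Sigma}$) contribute nonnegatively to the quadratic form. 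Note that the angular centrifugal term $m^2/\sin^2\theta$ is an essential ingredient of the cancellation --- the combination $-m^2a^2/\triangle+m^2/\sin^2\theta=m^2(\triangle-a^2\sin^2\theta)/(\triangle\sin^2\theta)$ is negative precisely in the ergoregion, and it is there that $(1/4)b^2$ rescues positivity --- whereas your sketch attributes the mechanism to a radial/indicial effect at $r_+$ alone.

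Two further inaccuracies: no ``reexamination of the estimate that produced $\alpha$'' in the cited reference is needed, since the cancellation is a zeroth-order (multiplication-operator) identity independent of how the kinetic part was estimated there; and your numerics are internally inconsistent, since $\lim_{r\to r_+}|b|=4|m|Mar_+/(r_+^2+a^2)^2=|m|a/(Mr_+)$ (using $r_+^2+a^2=2Mr_+$), not $|m|a/(2Mr_+)$, which is what is actually compatible with your own relation $\alpha=-(1/4)(\sup|b|)^2$.
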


\begin{proof}
That ${\cal A}$ is a densely-defined, linear 
and self-adjoint operator in $X$ is a consequence of 
Theorem~\ref{A0isself-adjoint} and the Rellich-Kato theorem.
For the latter, see e.g. Theorem~X.12 in \cite{reedsimon}, Vol. II. 
The positivity of ${\cal A}$  
is a simple consequence of the fact that 
\begin{align*} 
& \frac{1}{\overline{\Sigma}} \left( 
- \, \frac{m^2 a^2}{\triangle} + \frac{m^2}{\sin^2 \theta} \right) 
+ 
\frac{1}{4} \, b^2 = 
m^2 \left[ \frac{\triangle - 
a^2 \sin^2 \theta}{\triangle \overline{\Sigma} \, \sin^2 \theta} +  
\frac{4 M^2 a^2 r^2}{(\,\triangle \overline{\Sigma}\,)^2 } \right] 
\\
& = \frac{m^2}{ ( \, \triangle \overline{\Sigma} \, )^2 
\sin^2 \theta} 
\left[ (\triangle - 
a^2 \sin^2 \theta) \, \triangle \overline{\Sigma} 
+ 4 M^2 a^2 r^2 \sin^2\!\theta \right] \\
& = \frac{m^2}{( \, \triangle \overline{\Sigma} \, )^2 
\sin^2 \theta} 
\left\{ \, (\triangle - 
a^2 \sin^2 \theta) \, [\triangle (\Sigma + 2 M r) + 4 M^2 r^2]
+ 4 M^2 a^2 r^2 \sin^2\!\theta \, \right\} \\
& = \frac{m^2}{\triangle {\overline{\Sigma}}^2 
\sin^2 \theta} \left[ \, (\triangle - 
a^2 \sin^2 \theta) \, (\Sigma + 2 M r)
+ 4 M^2 r^2 \right] \\
&  = \frac{m^2}{\triangle {\overline{\Sigma}}^2 
\sin^2 \theta} \left[ \, (\Sigma - 2 M r) \, (\Sigma + 2 M r)
+ 4 M^2 r^2 \right] =  
\frac{m^2 \, \Sigma^2}{\triangle {\overline{\Sigma}}^2 
\sin^2 \theta} \geq 0 \, \, . 
\end{align*}
\end{proof}

\section{Formulation of an Initial Value Problem}
\label{formulation}

In the following, we give an initial value formulation for 
equations of the type of (\ref{kleingordonequation}) whose
possibility is indicated by 
Theorem~$4.11$ in \cite{beyer}, see also Theorem~$5.4.11$
in \cite{beyer2}. Here, we give the details of such formulation,
including abstract energy estimates that provide an independent 
basis for the estimate 
(\ref{massineq}) and also for 
its improvement (\ref{newmassinequality}) below.
Specialization of the abstract formulation
to $X$ given by 
(\ref{definitionofX}), $A: = {\bar{A}}_0 - C$,
$B$ given by (\ref{definitionofB}) and $C :=  
- (\alpha + \varepsilon)$ for some $\varepsilon > 0$, 
provides an initial-value 
formulation for (\ref{kleingordonequation}) on every 
open interval $I$ of ${\mathbb{R}}$
along with quantities that 
are conserved under time evolution. Note that in this case
$A + C = {\bar{A}}_0$.
For convenience, the proofs
of the following statements
are given in the Appendix~$2$.  

\begin{ass} \label{XA}
In the following, let $(X,\braket{\,|\,})$ be a non-trivial complex 
Hilbert space and $A$ be a densely-defined, linear and strictly
positive self-adjoint operator in $X$.
\end{ass}

\begin{definition}
We denote by $W^1_{A}$ the complex Hilbert space\footnote{$W^1_{A}$ may 
be regarded as a generalized Sobolev space.}
given by $D(A^{1/2})$ 
equipped with 
the scalar product $\braket{\,|\,}_1$, defined by 
\begin{equation*}
\braket{\xi|\eta}_1 := \braket{A^{1/2}\xi|A^{1/2}\eta} 
+ \braket{\xi|\eta}   
\end{equation*}
for every $\xi,\eta \in D(A^{1/2})$, and induced norm $\|\,\,\|_{1}$.
\end{definition}

\begin{rem} Note that, as a consequence of 
\begin{equation*}
\|\xi\|_{1} = (\|A^{1/2} \xi\|^2 + \|\xi\|^2)^{1/2} \geq \|\xi\|
\end{equation*}
for every $\xi \in D(A^{1/2})$, the imbedding 
$W^1_{A} \hookrightarrow X$
is continuous.
\end{rem}

\begin{ass} \label{BC}
Let $B : D(A^{1/2}) \rightarrow X$ be a symmetric linear operator 
in $X$ for which there are $a \in [0,1)$ and $b \in [0,\infty)$
such that 
\begin{equation*}
\|B \xi\|^{2} \leq a^2 \|A^{1/2} \xi\|^{2} + b^2 \|\xi\|^{2}
\end{equation*} 
for every $\xi  \in D(A^{1/2})$. Note that this implies that 
$B \in L(W^1_{A},X)$. Further, let  
$C \in L(W^1_{A},X)$ be a symmetric linear operator in $X$ and
$I$ be a non-empty open interval of ${\mathbb{R}}$.
\end{ass}

\begin{definition}
We define a solution space $S_{I}$ to consist of all differentiable 
$u : I \rightarrow W^1_{A}$
with ${\textrm{Ran}}(u) \subset D(A)$, such that  
$u^{\prime} : I \rightarrow X$ is differentiable and  
\begin{equation} \label{pde}
(u^{\prime})^{\prime}(t) + i B u^{\prime}(t) + (A + C) u(t) = 0 
\end{equation} 
for every $t \in I$.\footnote{Note that the differentiability of $u$
implies that 
$\textrm{Ran} u^{\prime} \subset W^1_{A}$.}
\end{definition}

Note that (\ref{pde}) contains two types of derivatives. Every first 
derivative of $u$ is to be understood in the sense of derivatives of
$W^1_{A}$-valued functions, whereas
every further derivative is to be understood in the sense of derivatives
of $X$-valued functions. Unless otherwise indicated, this 
convention is also adopted in the subsequent 
part of this section. On the other hand, since the imbedding 
$W^1_{A} \hookrightarrow X$ is continuous, differentiability 
in the sense of $W^1_{A}$-valued functions
also implies differentiability in the sense of $X$-valued 
functions, including agreement 
of the corresponding derivatives. In particular,  
every $u \in S_{I}$
also satisfies the equation 
\begin{equation} \label{pde2}
u^{\prime \prime}(t) + i B u^{\prime}(t) + (A + C) u(t) = 0 
\end{equation} 
for every $t \in I$, where here all derivatives are to be understood 
in the sense of derivatives
of $X$-valued functions. Further, note that the assumptions 
on $C$, in general, do not imply that $A + C$ is self-adjoint.

\begin{rem}
According to Theorem~$4.11$ in \cite{beyer}, see also 
Theorem~$5.4.11$ 
in \cite{beyer2}, for every $t_0 \in I$, $\xi \in D(A)$ and 
$\eta \in W^1_{A}$, there is a uniquely determined 
corresponding $u \in S_{I}$
such that $u(t_0) = \xi$ and
$u^{\, \prime}(t_0) = \eta$. The proof uses methods from the 
theory of semigroups of operators. Independently, the uniqueness 
of such $u$ follows more elementary from energy estimates 
in part~(iii) of the subsequent Lemma~\ref{conservationlaws}. 
\end{rem}

Parts~(i) and (ii) of the subsequent Lemma~\ref{conservationlaws} 
give a ``conserved current'' and a ``conserved energy'', 
respectively, that are 
associated with solutions of (\ref{pde}). Part~(iii) gives associated 
energy estimates, that, in particular, imply the uniqueness
of the initial value problem for (\ref{pde}) stated in (iv).

\begin{lemma} \label{conservationlaws}
Let $u \in S_{I}$ and $t_0 \in I$.
Then the following holds. 

\begin{itemize}
\item[(i)] If $v \in S_{I}$, then 
$j_{u,v} : I \rightarrow {\mathbb{C}}$, defined by 
\begin{equation*}
j_{u,v}(t) := \braket{u(t)|v^{\prime}(t)} - \braket{u^{\prime}(t)|v(t)} + 
i \braket{u(t)|B v(t)}
\end{equation*}
for every $t \in I$, is constant. 
\item[(ii)] The function $E_{u} : I \rightarrow 
{\mathbb{R}}$, defined by   
\begin{equation*}
E_{u}(t) := \|u^{\prime}(t)\|^2 + \braket{u(t)|(A + C)u(t)}
\end{equation*}
for every $t \in I$, is constant. 
\item[(iii)] 
In addition, let
$A + C$ be semibounded with lower bound $\gamma \in {\mathbb{R}}$.
Then
\begin{equation*}
\|u(t_2)\| \leq
\begin{cases} 
[\,\|u(t_1)\| + |E_u|^{1/2}(t_2 - t_1)\,] e^{|\gamma|^{1/2} 
\, (t_2-t_1)} &
\mbox{if } \gamma < 0 \, \, , \\
\|u(t_1)\| + E_u^{1/2}
\, (t_2-t_1) & \mbox{if } \gamma = 0 \, \, , \\
(2 E_{u}/\gamma)^{1/2} \left( 1 - e^{-\gamma^{1/2}(t_2-t_1)} \right)+ 
 \|u(t_1)\| e^{- \gamma^{1/2}(t_2-t_1)} & \mbox{if } \gamma > 0 \, \, ,
\end{cases} 
\end{equation*}   
for $t_1, t_2 \in I$ such that $t_1 \leq t_2$.
\item[(iv)] In addition, let
$A + C$ be semibounded. If $v \in S_{I}$ is such that 
\begin{equation*}
u(t_0) = v(t_0) \, \, , \, \, u^{\prime}(t_0) = v^{\prime}(t_0) \, \, , 
\end{equation*}
then $v = u$.
\end{itemize} 
\end{lemma}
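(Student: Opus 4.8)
The plan is to establish (i) and (ii) by direct differentiation combined with the equation (\ref{pde2}), and then to deduce (iii) from the conserved energy of (ii) by reducing everything to a scalar differential inequality for $\|u(\cdot)\|$, and finally (iv) from (ii) and (iii) applied to the difference of two solutions. The technical prerequisites I would use throughout are that $A^{1/2},B,C \in L(W^1_A,X)$ (for $A^{1/2}$ because $\|A^{1/2}\xi\|\le\|\xi\|_1$, for $B$ and $C$ by Assumption~\ref{BC}); this guarantees that $A^{1/2}u$, $Bu$, $Cu$ and their $v$-counterparts are differentiable $X$-valued maps with the expected derivatives $A^{1/2}u'$, $Bu'$, $Cu'$, so that all the formal manipulations are legitimate. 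The two algebraic facts driving the cancellations are that $A+C$ is symmetric on $D(A)$ and that $\braket{\xi|B\xi}\in\mathbb{R}$ for $\xi\in W^1_A$.

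For (i) I would differentiate $j_{u,v}$; the two $\braket{u'|v'}$ contributions cancel, leaving $\braket{u|v''}-\braket{u''|v}+i\braket{u'|Bv}+i\braket{u|Bv'}$. Substituting $u''=-iBu'-(A+C)u$ and $v''=-iBv'-(A+C)v$, the $B$-terms cancel by symmetry of $B$ and the remaining terms cancel by symmetry of $A+C$ on $D(A)$, so $j_{u,v}'\equiv0$. For (ii) I would write $\braket{u|(A+C)u}=\|A^{1/2}u\|^2+\braket{u|Cu}$ to make differentiability manifest, obtaining $E_u'=2\,\mathrm{Re}\braket{u'|u''}+2\,\mathrm{Re}\braket{u'|(A+C)u}$; substituting (\ref{pde2}) and noting that the contribution of $-iBu'$ is purely imaginary because $\braket{u'|Bu'}\in\mathbb{R}$ gives $\mathrm{Re}\braket{u'|u''}=-\mathrm{Re}\braket{u'|(A+C)u}$, whence $E_u'\equiv0$.

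The heart of the argument is (iii). Setting $\phi:=\|u(\cdot)\|$, I would combine $|\phi'|\le\|u'\|$ (Cauchy--Schwarz) with the conserved energy in the form $\|u'(t)\|^2=E_u-\braket{u(t)|(A+C)u(t)}\le E_u-\gamma\,\phi(t)^2$, and integrate the resulting scalar inequality by an integrating-factor (Gronwall) argument in each case. For $\gamma<0$ the bound $\|u'\|\le|E_u|^{1/2}+|\gamma|^{1/2}\phi$ yields the exponential estimate; for $\gamma=0$ one integrates $|\phi'|\le E_u^{1/2}$ directly. The delicate case is $\gamma>0$, where the target requires $\phi'+\gamma^{1/2}\phi\le(2E_u)^{1/2}$, i.e. the elementary but non-obvious inequality $(E_u-\gamma\phi^2)^{1/2}+\gamma^{1/2}\phi\le(2E_u)^{1/2}$, whose sharp constant comes from maximizing $x\mapsto(E_u-x^2)^{1/2}+x$ at $x=(E_u/2)^{1/2}$. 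I expect this case --- recovering the precise factor $\sqrt{2}$ --- to be the main obstacle, together with the technical point that $\phi$ need not be differentiable where $u$ vanishes; I would resolve the latter by replacing $\phi$ with $\phi_\varepsilon:=(\|u\|^2+\varepsilon^2)^{1/2}$, for which $|\phi_\varepsilon'|\le\|u'\|$ and $\|u'\|^2\le(E_u+\gamma\varepsilon^2)-\gamma\,\phi_\varepsilon^2$, running the estimates for $\phi_\varepsilon$, and letting $\varepsilon\downarrow0$.

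Finally, for (iv) I would set $w:=u-v$; by linearity of (\ref{pde}) one has $w\in S_I$, and the hypotheses give $w(t_0)=0$ and $w'(t_0)=0$, hence $E_w(t_0)=0$ and so $E_w\equiv0$ by (ii). Applying (iii) with $t_1=t_0$, the values $\|w(t_0)\|=0$ and $E_w=0$ force $\|w(t_2)\|=0$ for every $t_2\ge t_0$. To cover $t\le t_0$ I would use the invariance of the whole scheme under $t\mapsto 2t_0-t$, which merely flips $B\mapsto-B$ and leaves both the conservation law and the estimates of (iii) intact, since $B$ entered only through the real quantity $\braket{w'|Bw'}$; this gives $\|w\|=0$ for $t\le t_0$ as well, so $w\equiv0$ and $u=v$ on all of $I$.
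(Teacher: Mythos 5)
Your proof is correct and follows the same overall architecture as the paper: differentiation plus the equation and the symmetry of $B$ and $A+C$ for (i) and (ii), the energy inequality $\|u'\|^2+\gamma\|u\|^2\le E_u$ integrated case by case for (iii), and application to the difference $w=v-u$ for (iv). Two points of execution differ and are worth noting. In the case $\gamma>0$ the paper stays at the vector level: it bounds $\|(e^{\gamma^{1/2}\cdot}u)'(t)\|=\|e^{\gamma^{1/2}t}(u'(t)+\gamma^{1/2}u(t))\|$ by $(2E_u)^{1/2}e^{\gamma^{1/2}t}$ via $\|a+b\|^2\le 2(\|a\|^2+\|b\|^2)$ and then integrates weakly in $X$, using $\|u(t_2)-u(t_1)\|\le\int\|u'\|$; this never differentiates $\|u(\cdot)\|$ and so avoids the non-differentiability issue you must regularize away with $\phi_\varepsilon=(\|u\|^2+\varepsilon^2)^{1/2}$. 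Your scalar inequality $(E_u-x^2)^{1/2}+x\le(2E_u)^{1/2}$ is exactly the scalar shadow of the paper's parallelogram estimate, so the constants agree; your route works but is slightly more delicate. Conversely, in (iv) you are more careful than the paper: the estimates of (iii) are stated only for $t_1\le t_2$, so they directly force $w=0$ only for $t\ge t_0$, and the paper's one-line deduction leaves the backward direction implicit; your time-reversal argument ($t\mapsto 2t_0-t$ sends $B$ to $-B$, which still satisfies Assumption~\ref{BC} and leaves $E_w$ unchanged) is the right way to close that gap.
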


\begin{proof} See Appendix~$2$.
\end{proof}

\begin{figure} 
\centering
\includegraphics[width=7cm,height=7cm]{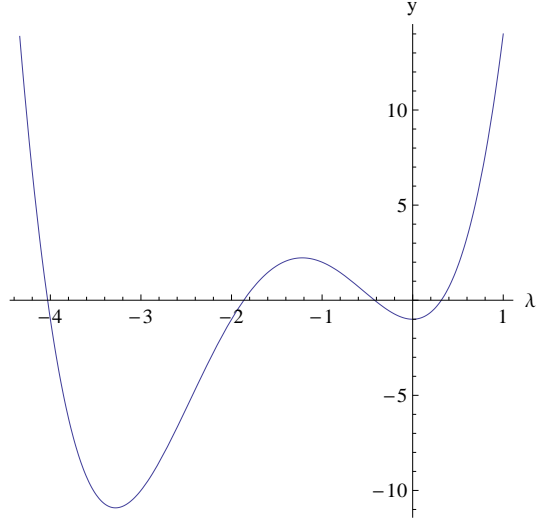}
\caption{Graph of $({\mathbb{R}} \rightarrow L(X,X), \lambda \mapsto 
{\tilde A} - \lambda B - \lambda^2)$ for ${\tilde A}$ and $B$
from Example~\ref{anothercounterexample}.}
\label{fig4}
\end{figure}

The following example proves that it is possible that the 
energy assumes strictly negative values, 
but that the solutions of (\ref{pde}) are stable, i.e., that there 
are no exponentially growing solutions. This is different from the 
case of vanishing $B$, where there are unstable solutions of
(\ref{pde}) if and only if the energy assumes strictly negative 
values.  

\begin{ex} \label{anothercounterexample}
The example uses 
for the Hilbert space $X$ the space
${\mathbb{C}}^2$ equipped with the Euclidean 
scalar product, ${\tilde A}: = A + C$ and $B$ are the linear 
operators on ${\mathbb{C}}^2$ whose representations with respect 
to the canonical basis are given by the matrices
\begin{equation} \label{anotherexample}
\begin{pmatrix}
  1 & 0 \\
  0 & -1
\end{pmatrix} \, \, \textrm{and} \, \, 
\begin{pmatrix}
  3 & 1 \\
  1 & 3
\end{pmatrix} \, \,  , 
\end{equation}
respectively. An analysis shows that ${\tilde A}$ and $B$ are bounded linear and 
self-adjoint operators in $X$, ${\tilde A}$ is semibounded, $B$ is positive
and ${\tilde A} + (1/4) B^2$ is strictly positive. 
Further, ${\tilde A}$ and $B$ do not commute. Finally, the operator 
polynomial $({\mathbb{C}} \rightarrow L(X,X), \lambda \mapsto
{\tilde A} - \lambda B - \lambda^2)$ has $4$ distinct real eigenvalues.
Therefore, in this case, there are no exponentially growing 
solutions of the corresponding equation (\ref{pde}).  
Fig~\ref{fig4} gives the graph of
$p := ({\mathbb{R}} \rightarrow L(X,X), \lambda \mapsto
\det({\tilde A} - \lambda B - \lambda^2)) = \lambda^4 + 
6 \lambda^3 + 8 \lambda^2 - 1$ which suggests that 
there are precisely $4$ distinct real roots. Indeed, we notice that 
\begin{equation*}
p(-5) > 0 \, \, , \, \, p(-4) < 0 \, \, , \, \,
p(-1) > 0 \, \, , \, \,  p(0) < 0 \, \, ,  \, \, p(1) > 0 
\end{equation*} 
and hence that $p$ has real roots in the intervals $(-5,-4)$, 
$(-4,-1)$, $(-1,0)$ and $(0,1)$. In addition, the value of the 
conserved energy $E_u$ corrresponding to the solution $u$ 
of (\ref{pde}) with initial data $u(0) ={\phantom{}}^{t}(0,1)$
and $u^{\, \prime}(0) ={\phantom{}}^{t}(0,0)$ is $< 0$. 
\end{ex}  

There are other possible definitions for the energy that is associated
with solutions of (\ref{pde}). In cases of vanishing $B$, such are 
usually not of further use. In the case of a nonvanishing $B$, 
they can be
useful as is the case for the RKG. In this 
case,  
the positivity of 
$E_{s,u}$ for sufficiently large masses of the field and 
\begin{equation} \label{specialvalue}
s = \frac{ma}{2M r_{+}}
\end{equation} 
provides a basis for (\ref{massineq})
and its 
improvement (\ref{newmassinequality}) below. 

\begin{corollary} \label{otherconservedquantities}
Let $s \in {\mathbb{R}}$ and $u \in S_{I}$. Then, the 
function $E_{s,u} : I \rightarrow 
{\mathbb{R}}$, defined by   
\begin{equation*}
E_{s,u}(t) := \|u^{\prime}(t) + i s u(t) \|^2 + 
\braket{u(t)|(A + C + s (B - s))u(t)}
\end{equation*}
for every $t \in I$, is constant. If
$A + C + s (B - s)$ is additionally semibounded with lower bound 
$\gamma \in {\mathbb{R}}$, then 
\begin{equation*}
\|u(t_2)\| \leq
\begin{cases} 
[\,\|u(t_1)\| + |E_{s,u}|^{1/2}(t_2 - t_1)\,] e^{|\gamma|^{1/2} 
\, (t_2-t_1)} &
\mbox{if } \gamma < 0 \, \, , \\
\|u(t_1)\| + E_{s,u}^{1/2}
\, (t_2-t_1) & \mbox{if } \gamma = 0 \, \, , \\
(2 E_{s,u}/\gamma)^{1/2} \left( 1 - e^{-\gamma^{1/2}(t_2-t_1)} \right)+ 
 \|u(t_1)\| e^{- \gamma^{1/2}(t_2-t_1)} & \mbox{if } \gamma > 0 \, \, ,
\end{cases} 
\end{equation*}   
for $t_1, t_2 \in I$ such that $t_1 \leq t_2$.
\end{corollary}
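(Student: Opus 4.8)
The plan is to deduce this from Lemma~\ref{conservationlaws} by a gauge transformation that absorbs the shift $s$ into a modified first-order operator, so that $E_{s,u}$ becomes the \emph{ordinary} energy of a transformed solution. Concretely, I would set $w : I \to W^1_A$, $w(t) := e^{ist} u(t)$. Multiplication by the smooth scalar $t \mapsto e^{ist}$ commutes with the linear operators $A$, $B$, $C$ and, by the product rule for scalar-times-vector-valued maps, preserves every differentiability and range requirement in the definition of $S_{I}$; hence $w$ is differentiable as a $W^1_A$-valued map with $\mathrm{Ran}(w) \subset D(A)$ and with differentiable $X$-valued derivative. From $w^{\prime}(t) = e^{ist}\big(u^{\prime}(t) + is\,u(t)\big)$ and $w^{\prime\prime}(t) = e^{ist}\big(u^{\prime\prime}(t) + 2is\,u^{\prime}(t) - s^2 u(t)\big)$, substituting (\ref{pde2}) for $u^{\prime\prime}$ and collecting terms, I expect to obtain
\begin{equation*}
w^{\prime\prime}(t) + i\,(B - 2s)\,w^{\prime}(t) + \big(A + C + s(B - s)\big)\,w(t) = 0 \, \, .
\end{equation*}
That is, $w$ lies in the solution space for the \emph{same} strictly positive $A$ with $B$ replaced by $\tilde B := B - 2s$ and $C$ replaced by $\tilde C := C + s(B - s)$.

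Next I would verify that $(\tilde B, \tilde C)$ still satisfy Assumption~\ref{BC}, so that Lemma~\ref{conservationlaws} applies to $w$ (Assumption~\ref{XA} holds unchanged, since $A$ is untouched). Here $\tilde C = C + sB - s^2 \in L(W^1_A,X)$ is symmetric because $C, B \in L(W^1_A,X)$ are symmetric and the multiple of the identity is bounded and symmetric. For $\tilde B = B - 2s$ the only delicate point is the relative bound: from $\|B\xi\|^2 \le a^2\|A^{1/2}\xi\|^2 + b^2\|\xi\|^2$ one gets $\|B\xi\| \le a\|A^{1/2}\xi\| + b\|\xi\|$, whence $\|\tilde B\xi\| \le a\|A^{1/2}\xi\| + (b + 2|s|)\|\xi\|$; squaring with an elementary inequality yields a bound of the required shape whose leading constant can be kept strictly below $1$. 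Thus shifting $B$ by a bounded multiple of the identity changes only the additive constant and leaves the $A^{1/2}$-relative bound $< 1$ intact.

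Once $w \in S_{I}$ for $(\tilde B, \tilde C)$ is secured, the conclusion is immediate. Since $|e^{ist}| = 1$, one has $\|w^{\prime}(t)\| = \|u^{\prime}(t) + is\,u(t)\|$ and $\|w(t)\| = \|u(t)\|$, and the phase cancels in the inner product, so $\braket{w(t)|(A + C + s(B - s))\,w(t)} = \braket{u(t)|(A + C + s(B - s))\,u(t)}$. Therefore the ordinary conserved energy of the $(\tilde B, \tilde C)$-problem, $\|w^{\prime}(t)\|^2 + \braket{w(t)|(A + \tilde C)\,w(t)}$ with $A + \tilde C = A + C + s(B - s)$, is identically $E_{s,u}(t)$. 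Lemma~\ref{conservationlaws}(ii) then gives that $E_{s,u}$ is constant, and, when $A + C + s(B - s)$ is semibounded with lower bound $\gamma$, Lemma~\ref{conservationlaws}(iii) applied to $w$ (using $\|w(t_i)\| = \|u(t_i)\|$ and this identification of energies) reproduces verbatim the three claimed estimates for $\|u(t_2)\|$. The main obstacle is precisely the bookkeeping of the previous paragraph: confirming that the transformed data $(\tilde B, \tilde C)$ still meet Assumption~\ref{BC}, in particular that $\tilde B$ retains an $A^{1/2}$-relative bound strictly less than $1$; granting that, the statement is a direct corollary.
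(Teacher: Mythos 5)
Your proposal is correct and follows essentially the same route as the paper's own proof in Appendix~2: the paper likewise sets $v(t) := e^{ist}u(t)$, derives $(v^{\prime})^{\prime}(t) + i(B-2s)v^{\prime}(t) + (A + C + sB - s^2)v(t) = 0$, notes that $(X, A, B-2s, C+sB-s^2)$ satisfy Assumptions~\ref{XA} and \ref{BC}, and then invokes Lemma~\ref{conservationlaws}. Your explicit verification that the $A^{1/2}$-relative bound of $B-2s$ stays strictly below $1$ is a point the paper merely asserts, so that extra care is welcome but not a divergence in method.
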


\begin{proof} See Appendix~$2$.
\end{proof}

\begin{theorem} \label{stability}
If there is $s \in {\mathbb{R}}$ such that  
$A + C + s (B - s)$ is positive, then there are no 
exponentially growing solutions of (\ref{pde}).
\end{theorem}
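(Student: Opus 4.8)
The plan is to obtain stability directly from the conserved quantity supplied by Corollary~\ref{otherconservedquantities}, specialized to the value of $s$ whose existence is assumed. First I would fix such an $s$ and abbreviate $P := A + C + s(B-s)$. Since $A$ is self-adjoint and $B$, $C$ are symmetric, $P = A + C + sB - s^2$ is a symmetric operator, and the hypothesis that $P$ is positive means precisely that $\braket{\xi|P\xi} \geq 0$ for every $\xi \in W^1_{A}$. In particular $\gamma := 0$ is a (not necessarily optimal) lower bound for $P$, so Corollary~\ref{otherconservedquantities} applies to any $u \in S_{I}$ with this value of $\gamma$.

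Next I would exploit the structure of the conserved energy. Because
\[ E_{s,u}(t) = \|u^{\prime}(t) + i s u(t)\|^2 + \braket{u(t)|P\,u(t)} \]
is a sum of two non-negative terms (the first manifestly, the second by positivity of $P$), the constant value $E_{s,u}$ is non-negative, so $E_{s,u}^{1/2}$ is well defined. Inserting $\gamma = 0$ into the estimate of Corollary~\ref{otherconservedquantities} then gives
\[ \|u(t_2)\| \leq \|u(t_1)\| + E_{s,u}^{1/2}\,(t_2 - t_1) \]
for all $t_1 \leq t_2$ in $I$. Thus $\|u(t)\|$ grows at most affinely in $t$, which already rules out any solution growing exponentially as $t \to +\infty$.

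To rule out exponential growth as $t \to -\infty$ as well, I would use that the hypothesis is invariant under time reversal. Setting $w(t) := u(-t)$ turns (\ref{pde}) into the same equation with $B$ replaced by $-B$, and $-B$ satisfies Assumption~\ref{BC} with the same constants. The operator governing the corresponding energy at parameter $-s$ is $A + C + (-s)(-B - (-s)) = A + C + s(B-s) = P$, i.e.\ the very same positive operator. Hence the identical affine bound holds for $\|w\|$, and therefore $\|u(t)\|$ is dominated by an affine function of $|t|$ on all of $I$. Since every genuinely exponentially growing function eventually dominates every affine one, no such $u \in S_{I}$ can exist.

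I expect no serious analytic obstacle here, since the two applications of Corollary~\ref{otherconservedquantities} are routine; the only points requiring care are conceptual. One is pinning down the meaning of \emph{exponentially growing} (I would take it as $\limsup_{|t| \to \infty} |t|^{-1}\log\|u(t)\| > 0$, which the affine bound forces to be $0$). The other is checking that positivity indeed licenses the choice $\gamma = 0$ even when $P$ has a strictly positive greatest lower bound; in that stronger case the $\gamma > 0$ branch of the estimate applies instead and yields a uniformly bounded $\|u(t)\|$, an a fortiori conclusion. Either way, exponential growth is excluded.
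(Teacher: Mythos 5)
Your proposal is correct and follows essentially the same route as the paper, which simply declares the theorem a direct consequence of Corollary~\ref{otherconservedquantities}; your argument is the natural unpacking of that citation (positivity gives the lower bound $\gamma = 0$, hence $E_{s,u}\geq 0$ and the affine growth bound). The time-reversal step for $t\to-\infty$, with $B\mapsto -B$ and $s\mapsto -s$ reproducing the same positive operator, is a sensible extra detail the paper leaves implicit.
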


\begin{proof} The statement is a direct consequence of 
Corollary~\ref{otherconservedquantities} (or Theorem~$4.17$~(ii)
in \cite{beyer}, see also Theorem~$5.4.17$~(ii) in \cite{beyer2}). 

\end{proof}

\begin{ass} \label{AssumptionsForRKG}
In the following, we assume that $X$ is given by 
(\ref{definitionofX}), $A: = {\bar{A}}_0 - C$,
$B$ is given by (\ref{definitionofB}) and $C :=  
- (\alpha + \varepsilon)$ for some $\varepsilon > 0$. 
\end{ass}

Theorem~\ref{stability} leads to an improvement of the estimate 
(\ref{massineq}).

\begin{theorem}
If 
\begin{equation} \label{newmassinequality}
\mu \geqslant \frac{|m|a}{2Mr_{+}} 
\sqrt{1 + \frac{2M}{r_{+}}}
\, \, ,
\end{equation}
then there are no 
exponentially growing solutions of (\ref{pde}).
\end{theorem}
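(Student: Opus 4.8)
The plan is to invoke Theorem~\ref{stability} with the distinguished value
\[
s = \frac{ma}{2Mr_{+}}
\]
from (\ref{specialvalue}). Under Assumption~\ref{AssumptionsForRKG} one has $A + C = {\bar A}_0$, so the operator whose positivity is required is ${\bar A}_0 + s(B - s) = {\bar A}_0 + sB - s^2$, which is self-adjoint on $D({\bar A}_0)$ by the Kato--Rellich theorem (as in Lemma~\ref{A0barplusonefourthofBsquaredispositive}), since $B$ is bounded. Once I show it is positive, Theorem~\ref{stability} immediately gives the absence of exponentially growing solutions of (\ref{pde}). The case $m=0$ is trivial, since then $B=0$, $s=0$ and the potential below reduces to $\mu^2\Sigma\geq 0$; so I assume $m\neq 0$.

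First I would compute the quadratic form on the core $D$ supplied by Theorem~\ref{core}, whose elements are products $f(r)\,(P^m_l\circ\cos)$ with $f$ compactly supported in $(r_+,\infty)$. Integrating by parts in $r$ and $\theta$ (the radial boundary terms vanish by compact support, the angular ones because $\sin\theta$ vanishes at $\theta\in\{0,\pi\}$ while the Legendre factors stay regular), and noting that the weight ${\overline\Sigma}$ in (\ref{definitionofX}) cancels the prefactor $1/{\overline\Sigma}$ of $D^2_{r\theta}$, I expect to obtain for $u\in D$
\[
\braket{u|({\bar A}_0 + sB - s^2)u} = \int_{\Omega_s}\big[\,\Delta\,|\partial_r u|^2 + |\partial_\theta u|^2 + V_s\,|u|^2\,\big]\sin\theta\;dr\,d\theta ,
\]
where, using $Bf=bf$ from (\ref{definitionofB}) and $b{\overline\Sigma}=4mMar/\Delta$, the effective potential is
\[
V_s = \frac{m^2}{\sin^2\theta} + \mu^2\Sigma - \frac{m^2a^2}{\Delta} + s\,\frac{4mMar}{\Delta} - s^2{\overline\Sigma}.
\]
Since the gradient terms are non-negative, it suffices to prove $V_s\geq 0$ pointwise.

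The decisive step is the algebraic simplification of $V_s$ at $s=ma/(2Mr_+)$. Inserting ${\overline\Sigma}=\Sigma+2Mr+4M^2r^2/\Delta$, the three terms carrying $1/\Delta$ combine into a perfect square (mirroring the cancellation in Lemma~\ref{A0barplusonefourthofBsquaredispositive}),
\[
-\frac{m^2a^2}{\Delta} + \frac{2m^2a^2 r}{r_+\Delta} - \frac{m^2a^2 r^2}{r_+^2\Delta} = -\frac{m^2a^2}{r_+^2}\,\frac{(r-r_+)^2}{\Delta} = -\frac{m^2a^2}{r_+^2}\,\frac{r-r_+}{r-r_-},
\]
using $\Delta=(r-r_+)(r-r_-)$, leaving $V_s = \tfrac{m^2}{\sin^2\theta} + (\mu^2-s^2)\Sigma - 2Mrs^2 - \tfrac{m^2a^2}{r_+^2}\tfrac{r-r_+}{r-r_-}$. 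The bound (\ref{newmassinequality}) is $\mu^2\geq s^2(1+2M/r_+)$, so $\mu^2-s^2\geq 0$ and the only $\theta$-dependent pieces $m^2/\sin^2\theta$ and $(\mu^2-s^2)a^2\cos^2\theta$ are both minimized at $\theta=\pi/2$; there $\Sigma=r^2$ and $\sin\theta=1$. Applying $\mu^2-s^2\geq 2Ms^2/r_+$ with $s^2=m^2a^2/(4M^2r_+^2)$ and factoring out $m^2$, the claim should reduce to
\[
1 + \frac{a^2 r(r-r_+)}{2Mr_+^3} - \frac{a^2}{r_+^2}\,\frac{r-r_+}{r-r_-} \geq 0 .
\]
Because $r_-\leq r_+$ forces $0\leq (r-r_+)/(r-r_-)\leq 1$ and the middle term is non-negative for $r\geq r_+$, the left side is at least $1-a^2/r_+^2$, which is non-negative since $r_+\geq M\geq a$; hence $V_s\geq 0$.

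Finally I would extend positivity from the core to all of $D({\bar A}_0)$ by approximating $u$ in the graph norm of ${\bar A}_0$, so that $\braket{u_n|({\bar A}_0+sB-s^2)u_n}\to\braket{u|({\bar A}_0+sB-s^2)u}$ with each term $\geq 0$; thus ${\bar A}_0+s(B-s)$ is positive and Theorem~\ref{stability} applies. I expect the main obstacle to be the third paragraph: carrying out the $1/\Delta$ cancellation into $-\tfrac{m^2a^2}{r_+^2}(r-r_+)^2/\Delta$ and recognizing that, after the mass bound is invoked, positivity collapses to the elementary inequalities $(r-r_+)/(r-r_-)\leq 1$ and $a\leq r_+$. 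The angular reduction to $\theta=\pi/2$ and the core-to-domain extension are routine, and the improvement over (\ref{massineq}) comes precisely from the extra factor gained by the perfect-square cancellation (which removes the $a^2/r_+^2$ under the old square root).
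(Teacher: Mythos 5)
Your proposal is correct and follows essentially the same route as the paper: reduce the claim via Theorem~\ref{stability} to the positivity of ${\bar A}_0 + sB - s^2$ at $s = ma/(2Mr_{+})$, exhibit the $1/\Delta$ terms of the effective potential as a perfect square $-\tfrac{m^2a^2}{r_+^2}(r-r_+)^2/\Delta$, bound the result pointwise using $(r-r_+)/(r-r_-)\leq 1$, $a\leq r_+$ and $\mu^2 - s^2 \geq 2Ms^2/r_+$, and pass from the core $D$ of Theorem~\ref{core} to all of $D({\bar A}_0)$. The only difference is organizational: the paper splits the potential into $V_{s1}$ and $V_{s2}$ and bounds $m^2/\sin^2\theta + V_{s1}\geq 0$ and $V_{s2}\geq 0$ separately, while you evaluate the $\theta$-dependent pieces at $\theta=\pi/2$ and collapse everything into a single inequality, but the underlying estimates are identical.
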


\begin{proof}
Let $s \in {\mathbb{R}}$.
In the following, we estimate ${\bar A}_{0} + s B - s^2$. 
For this, let $f \in D(A_0)$. Then 
\begin{align*}
& (A_0 + s B - s^2) f  =
\frac{1}{\overline{\Sigma}} 
\left( - \frac{\partial}{\partial r} \triangle  \frac{\partial}{\partial r}
- \frac{1}{\sin \theta} \,
\frac{\partial}{\partial \theta} \sin \theta \,\frac{\partial}{\partial \theta}
+ \frac{m^2}{\sin^2 \theta} + V_{s} \right)f 
\, \, ,
\end{align*}
where  
\begin{align*}
V_s & := - \frac{m^2 a^2}{\triangle} + \mu^2 \Sigma + 
s \, \frac{4mMar}{\triangle}  - s^2 \, \overline{\Sigma} \\
& \, \, = 
 - \frac{(2s M r - m a)^2}{\Delta} + (\mu^2 -s^2) \Sigma  - 2 s^2 M r
\, \, .
\end{align*}
First, we note that 
\begin{equation*}
\frac{m^2}{\sin^2 \theta} \geq m^2 \, \, . 
\end{equation*}
In the following, we assume that $s = ma/(2Mr_{+})$. Then  
\begin{equation*}
V_{s1} := - \frac{(2s M r - m a)^2}{\Delta} = - 
\left(\frac{ma}{r_{+}}\right)^2 +
\left(\frac{ma}{r_{+}}\right)^2 \, 
\frac{2 \sqrt{M^2 - a^2}}{r-r_{-}}  \geq - m^2  \, \, .
\end{equation*}
Further, we define 
\begin{equation*}
V_{s2} := (\mu^2 - s^2) \Sigma  - 2 s^2 M r = 
(\mu^2 - s^2) r^2 - 2 s^2 M r + a^2 (\mu^2 - s^2) \cos^2 \theta 
\, \, .
\end{equation*}
If $\mu \geq |s| \cdot [\,1 + (2M/r_{+})\,]^{1/2}$, then
\begin{equation*}
V_{s2} \geq s^2 \, \frac{2M}{r_{+}} \, r^2 
- 2 s^2 M r + a^2 (\mu^2 - s^2) \cos^2 \theta 
\geq a^2 (\mu^2 - s^2) \cos^2 \theta \geq 0 \, \, .
\end{equation*}
As a consequence, 
\begin{equation*}
\frac{1}{\overline{\Sigma}} \left( \frac{m^2}{\sin^2 \theta} +
V_{s} \right) \geq  0 \, \, . 
\end{equation*}
Further, we conclude that 
\begin{align*}
& \braket{f \otimes 
(P^{m}_{l} \circ \cos)\,|\,(A_0 + s B - s^2) (f \otimes 
(P^{m}_{l} \circ \cos))} \\
& \geq  
\int_{\Omega_{s}} \sin \theta \, (f \otimes 
(P^{m}_{l} \circ \cos))^{*}
\left( - \frac{\partial}{\partial r} \triangle  
\frac{\partial}{\partial r} 
- \frac{1}{\sin \theta} \,
\frac{\partial}{\partial \theta} \sin \theta \,\frac{\partial}{\partial \theta} 
  \right) \\
& \qquad \quad (f \otimes 
(P^{m}_{l} \circ \cos)) \, dr d\theta  \geq 0 
\end{align*} 
for every $f \in C^{2}_{0}((r_{+},\infty),{\mathbb{C}})$
and $l \in \{|m|, |m|+1, \dots\}$. Since $D$ is a core for 
${\bar A}_{0}$, 
this implies that
\begin{equation*}
{\bar A}_0 + s B - s^2 \geq 0 \, \, .
\end{equation*}
Hence the statement follows from Theorem~\ref{stability}.
\end{proof}

The following gives a connection of the 
operator ${\bar A}_0 + s B - s^2$, $s \in {\mathbb{R}}$, 
and the Killing field $\partial_{t} + s \partial_{\varphi}$.
The corresponding proof is given in Appendix~$2$. This connection 
sheds light on the previous proof of the positivity of
${\bar A}_0 + s B - s^2$ for $s = ma/(2Mr_{+})$ for sufficiently 
large $\mu$. Differently to $g_{tt}$, the term 
$g(\partial_{t} + s \partial_{\varphi},\partial_{t} 
+ s \partial_{\varphi})$ is positive in a neighbourhood
of the event horizon, but gradually turns negative away from the 
horizon. The latter is compensated by the mass term 
$\mu^2 \rho$ for sufficiently large $\mu$.

\begin{lemma} \label{ConnectionToKillingVf}
Let $s \in {\mathbb{R}}$ and $\xi := \partial_{t} + s 
\partial_{\varphi}$. Then
\begin{align*}
& [\,A_0 + m s B  - (ms)^2\,] f \\  
& = \frac{1}{g^{tt}} \left[\frac{1}{\sqrt{-|g|}} \, \partial_{r} 
\sqrt{-|g|} \, g^{rr} 
\partial_{r} + \frac{1}{\sqrt{-|g|}} \, \partial_{\theta} 
\sqrt{-|g|} \, g^{\theta \theta} \partial_{\theta} \right] f + 
\frac{ m^2 g(\xi,\xi) + \mu^2 \rho}{-g_{\varphi \varphi}} f \, \, . 
\end{align*}
for every $f \in D(A_0)$, where 
\begin{equation*}
\rho := - [\, g_{tt} g_{\varphi \varphi} - (g_{t\varphi})^2\,] 
= \triangle \sin^2\! \theta \, \, . 
\end{equation*}
\end{lemma}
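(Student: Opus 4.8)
The plan is to verify the stated identity by evaluating both sides as second-order differential operators in $(r,\theta)$ and comparing their coefficient functions pointwise on $\Omega_s$; since both sides act on the same $f \in D(A_0)$, which is $C^2$, pointwise agreement of the coefficients suffices.

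First I would assemble the inverse-metric data. Because the Kerr metric splits into a $(t,\varphi)$-block and an $(r,\theta)$-block, its determinant factorizes as the product of the two block determinants; using $g_{tt}g_{\varphi\varphi} - (g_{t\varphi})^2 = -\rho = -\Delta\sin^2\theta$ and $g_{rr}g_{\theta\theta} = \Sigma^2/\Delta$ one gets $|g| = -\Sigma^2\sin^2\theta$, hence $\sqrt{-|g|} = \Sigma\sin\theta$. From the diagonal $(r,\theta)$-block, $g^{rr} = 1/g_{rr} = -\Delta/\Sigma$ and $g^{\theta\theta} = 1/g_{\theta\theta} = -1/\Sigma$, while inverting the $(t,\varphi)$-block gives $g^{tt} = g_{\varphi\varphi}/(-\rho) = \overline\Sigma/\Sigma$, so that $1/g^{tt} = \Sigma/\overline\Sigma$.

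Next I would treat the two derivative terms on the right-hand side. Substituting the above into $\frac{1}{g^{tt}}\,\frac{1}{\sqrt{-|g|}}\,\partial_r\sqrt{-|g|}\,g^{rr}\partial_r$, the factors of $\Sigma$ and $\sin\theta$ cancel and one recovers exactly $-\frac{1}{\overline\Sigma}\,\partial_r\Delta\,\partial_r$; the analogous computation for the $\theta$-term produces $-\frac{1}{\overline\Sigma\sin\theta}\,\partial_\theta\sin\theta\,\partial_\theta$. These are precisely the second-order parts of $A_0$, and since the expressions are already in divergence form there are no leftover first-order pieces to reconcile.

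It then remains to match the zeroth-order (potential) terms. On the left, the explicit form of $A_0$, the definition of $B$ (multiplication by $b = 4mMar/(\Delta\overline\Sigma)$), and the computation of $V_s$ from the proof of the preceding theorem with $s$ replaced by $ms$ give the potential $\frac{1}{\overline\Sigma}\big(\frac{m^2}{\sin^2\theta} + V_{ms}\big)$, where $V_{ms} = -\frac{m^2a^2}{\Delta} + \mu^2\Sigma + \frac{4m^2sMar}{\Delta} - (ms)^2\overline\Sigma$. On the right, I would expand $g(\xi,\xi) = g_{tt} + 2s\,g_{t\varphi} + s^2 g_{\varphi\varphi}$, divide by $-g_{\varphi\varphi} = \frac{\Delta\overline\Sigma}{\Sigma}\sin^2\theta$, and add $\mu^2\rho/(-g_{\varphi\varphi}) = \mu^2\Sigma/\overline\Sigma$. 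After clearing the common factor $\overline\Sigma$, the $\mu^2$, the $s$-linear, and the $s^2$ contributions match term by term, and the only genuinely non-automatic step is the $s$-independent part, which collapses to the elementary relation $\Sigma - 2Mr = \Delta - a^2\sin^2\theta$, equivalently $\Delta = \Sigma - 2Mr + a^2\sin^2\theta$; this is immediate from $\Sigma = r^2 + a^2\cos^2\theta$, $\Delta = r^2 - 2Mr + a^2$, and $\cos^2\theta + \sin^2\theta = 1$. The main obstacle is purely bookkeeping: tracking the $\Sigma,\overline\Sigma,\Delta,\sin\theta$ factors through the potential term so that they cancel to the claimed $V_{ms}$. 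With that identity in hand the two sides agree, which proves the lemma.
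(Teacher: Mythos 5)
Your proposal is correct and is essentially the same argument as the paper's: a direct verification using the explicit components of the inverse Kerr metric, the determinant $\sqrt{-|g|}=\Sigma\sin\theta$, and the relations $g^{tt}=-g_{\varphi\varphi}/\rho$, $g^{t\varphi}=g_{t\varphi}/\rho$, with the zeroth-order terms collapsing via $\Delta-a^2\sin^2\theta=\Sigma-2Mr$. The only cosmetic difference is that the paper organizes the computation by first rewriting $\tfrac{1}{g^{tt}}\Box$ and identifying $A_0f$ with the $(r,\theta)$-derivative part plus $\tfrac{m^2g_{tt}+\mu^2\rho}{-g_{\varphi\varphi}}f$, then adding the $msB-(ms)^2$ pieces, whereas you compare coefficients term by term; the content is identical.
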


\begin{proof} See Appendix~$2$.
\end{proof}

Subsequently, we rewrite (\ref{pde2}) into an equivalent 
time-dependent wave 
equation that is governed by a family of unitarily equivalent positive 
self-adjoint operators. The latter equation might turn out 
useful for further 
investigation since only self-adjoint operators are involved.
On the other hand, a subsequent example proves that from 
the abstract properties
of this family alone it cannot be concluded that the solutions of the 
equation are stable. 

\begin{lemma} \label{timedep2ndordereq}
Let $B$ be additionally bounded and 
$u \in S_{I}$. Then, $v : I \rightarrow X$ defined 
by 
\begin{equation*}
v(t) := \exp((it/2)B) u(t)
\end{equation*}
for every $t \in I$ is twice differentiable
in the sense of derivatives
of $X$-valued functions and satisfies
\begin{equation} \label{timedep2ndorderequation}
v^{ \prime \prime}(t) + A(t) v(t) = 0 
\end{equation}
for every $t \in I$,
where
\begin{equation} \label{familyofoperators}
A(t) := \exp((it/2)B) 
\bigg(A + C + \frac{1}{4} \, B^2 \bigg) \exp(-(it/2)B) 
\end{equation}
for every $t \in {\mathbb{R}}$.
\end{lemma}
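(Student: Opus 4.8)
The plan is to differentiate $v$ twice in the sense of $X$-valued functions, simplify the result by means of the $X$-valued equation (\ref{pde2}) satisfied by $u$, and then recognize what remains as $A(t) v(t)$ with $A(t)$ from (\ref{familyofoperators}). For brevity I write $U(t) := \exp((it/2)B)$. Since $B$ is now additionally bounded (and symmetric, hence self-adjoint), each $U(t)$ is a unitary element of $L(X,X)$, the map $(\mathbb{R} \rightarrow L(X,X), t \mapsto U(t))$ is differentiable in operator norm with $U'(t) = (i/2)\, B\, U(t) = (i/2)\, U(t)\, B$, and both $B$ and $B^2$ commute with every $U(t)$. All derivatives below are taken in the $X$-valued sense; since $u$ is $W^1_A$-differentiable and $W^1_A \hookrightarrow X$ continuously, $u$ is $X$-differentiable with the same derivative $u'$, and $u'$ is $X$-differentiable, so $u$ is twice $X$-differentiable with $u''$ supplied by (\ref{pde2}).

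First I would apply the product rule to the pointwise product of the norm-differentiable operator function $U$ and the $X$-differentiable $u$, obtaining
\begin{equation*}
v'(t) = U'(t)\, u(t) + U(t)\, u'(t) = \tfrac{i}{2}\, B\, v(t) + U(t)\, u'(t) \, .
\end{equation*}
Each summand is again $X$-differentiable (the first because $B \in L(X,X)$ and $v$ is differentiable, the second by the product rule applied to $U$ and $u'$), so $v$ is twice differentiable. Differentiating once more and inserting the expression for $v'$ yields
\begin{align*}
v''(t) &= \tfrac{i}{2}\, B\, v'(t) + \tfrac{i}{2}\, B\, U(t)\, u'(t) + U(t)\, u''(t) \\
&= - \tfrac{1}{4}\, B^2 v(t) + i\, B\, U(t)\, u'(t) + U(t)\, u''(t) \, .
\end{align*}

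Next I would insert $u''(t) = - i B u'(t) - (A+C) u(t)$ from (\ref{pde2}). Using $U(t) B = B\, U(t)$ gives $U(t) u''(t) = - i\, B\, U(t)\, u'(t) - U(t)(A+C) u(t)$, so the two terms $\pm\, i B U(t) u'(t)$ cancel and there remains
\begin{equation*}
v''(t) = - \tfrac{1}{4}\, B^2 v(t) - U(t)(A+C) u(t) \, .
\end{equation*}
Finally, since $\mathrm{Ran}(u) \subset D(A)$, I would write $u(t) = U(t)^{-1} v(t) = U(-t) v(t)$ and, using that $B^2$ commutes with $U(t)$, rewrite
\begin{equation*}
v''(t) = - U(t) \Big( A + C + \tfrac{1}{4}\, B^2 \Big) U(-t)\, v(t) = - A(t)\, v(t) \, ,
\end{equation*}
which is (\ref{timedep2ndorderequation}).

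I expect the only genuine subtlety to be the differentiability bookkeeping rather than the algebra: one must ensure that the product rule for an $L(X,X)$-valued factor times an $X$-valued factor applies, which is precisely why boundedness of $B$ (hence norm-differentiability of $U$) is assumed, and one must check that the right-hand side is well defined, i.e. that $v(t) \in D(A(t))$. The latter holds because $D(A(t)) = U(t)\, D(A)$ while $u(t) \in D(A)$, so that $A(t)\, v(t) = U(t)(A + C + \tfrac{1}{4} B^2)\, u(t)$ is meaningful; here the operator $A + C + \tfrac{1}{4} B^2$ has domain $D(A)$ since $C \in L(W^1_A, X)$ and $B^2 \in L(X,X)$. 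Once these points are settled, the computation above closes the proof.
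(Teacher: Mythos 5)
Your proposal is correct and takes essentially the same approach as the paper: both rest on the product rule for $t \mapsto \exp(tD)f(t)$ with $D \in L(X,X)$ (which the paper establishes directly from difference quotients), the commutation of $B$ with $\exp(\pm(it/2)B)$, and substitution of (\ref{pde2}). The only cosmetic difference is that the paper differentiates $u(t) = \exp(-(it/2)B)v(t)$ to express $u'$ and $u''$ through $v$, $v'$, $v''$ and then inserts these into (\ref{pde2}), whereas you differentiate $v = \exp((it/2)B)u$ directly and insert $u''$; the algebra is the same read in the opposite direction, and your added check that $v(t) \in D(A(t))$ is a welcome extra.
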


\begin{proof} See Appendix~$2$.
\end{proof}

The previous 
can be used to prove the stability of the solutions of 
(\ref{pde}) in particular cases where the operators 
$A+C$ and $B$ commute. Note that in these cases, 
there is a further 
conserved ``energy'' associated to the solutions of 
(\ref{pde}).

\begin{theorem} \label{caseofcommutingoperators}
If, in addition, 
$A + C$ is self-adjoint and semibounded, 
$B$ is bounded,   
$A + C$ and $B$ commute, i.e., 
\begin{equation*}
B \circ (A+C) \supset (A+C) \circ B \, \, , 
\end{equation*}  
and 
\begin{equation*}
A + C + \frac{1}{4} \, B^2 \, \, ,
\end{equation*}
is positive, then there are no exponentially growing solutions of 
(\ref{pde}).
\end{theorem}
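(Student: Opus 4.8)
The plan is to use Lemma~\ref{timedep2ndordereq} to convert the problem into an \emph{autonomous} abstract wave equation governed by the single positive self-adjoint operator $\mathcal{A} := A + C + \frac14 B^2$, and then to read off at-most-linear growth of $\|u(t)\|$ from a conserved energy. Since $B$ is bounded, Lemma~\ref{timedep2ndordereq} applies to any $u \in S_I$: the function $v(t) := \exp((it/2)B)u(t)$ is twice differentiable as an $X$-valued map and satisfies $v''(t) + A(t)v(t) = 0$ with $A(t) = \exp((it/2)B)\,\mathcal{A}\,\exp(-(it/2)B)$. Because $\exp((it/2)B)$ is unitary (see the Remark following the definition of $B$), one has $\|u(t)\| = \|v(t)\|$, so it suffices to bound $\|v(t)\|$.

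The key step is to exploit the commutativity hypothesis $B\circ(A+C) \supset (A+C)\circ B$ to show that $A(t)$ does not actually depend on $t$. I would argue that $\exp((it/2)B)$ commutes with $A+C$: since $B$ is bounded and self-adjoint and, by hypothesis, maps $D(A+C)=D(A)$ into itself with $B(A+C)x=(A+C)Bx$ there, every power $B^n$ commutes with $A+C$ on $D(A)$; passing to the norm-convergent exponential series and invoking the closedness of $A+C$ then gives $\exp((it/2)B)(A+C)\subseteq (A+C)\exp((it/2)B)$. As $B^2$ trivially commutes with $\exp((it/2)B)$, the unitary group commutes with $\mathcal{A}$, so $A(t)=\mathcal{A}$ and $v$ solves the autonomous equation $v''(t)+\mathcal{A}v(t)=0$. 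Here $\mathcal{A}$ is self-adjoint (since $A+C$ is self-adjoint and $\frac14 B^2$ is bounded self-adjoint, by the Rellich--Kato theorem, exactly as in Lemma~\ref{A0barplusonefourthofBsquaredispositive}) and positive by assumption.

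Finally, for this autonomous equation I would use the conserved energy
\[
E_v(t):=\|v'(t)\|^2+\braket{v(t)|\mathcal{A}v(t)},
\]
which is the ``further conserved energy'' mentioned above and equals $\|u'(t)+\tfrac{i}{2}Bu(t)\|^2+\braket{u(t)|\mathcal{A}u(t)}$. A direct differentiation using the self-adjointness of $\mathcal{A}$ and the equation $v''+\mathcal{A}v=0$ shows $E_v'=0$ (this is Lemma~\ref{conservationlaws}(ii) for the reduced data $(\mathcal{A},0,0)$). Since $\mathcal{A}$ is positive, both $\|v'(t)\|^2$ and $\braket{v(t)|\mathcal{A}v(t)}=\|\mathcal{A}^{1/2}v(t)\|^2$ are nonnegative, whence $\|v'(t)\|\leq E_v^{1/2}$ for all $t$ and, upon integrating, $\|v(t_2)\|\leq \|v(t_1)\|+E_v^{1/2}(t_2-t_1)$ for $t_1\leq t_2$ (the $\gamma=0$ case of Lemma~\ref{conservationlaws}(iii)). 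Transferring back by unitarity gives the same linear bound for $\|u(t)\|$, so no solution can grow exponentially.

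I expect the main obstacle to be the autonomy step, namely the rigorous vanishing of the time-dependence of $A(t)$: one must carefully handle the commutation of the operator exponential of the bounded operator $B$ with the unbounded self-adjoint operator $A+C$ and keep track of domains --- that $\exp((it/2)B)$ preserves $D(\mathcal{A})=D(A)$ and the form domain $W^1_{\mathcal{A}}=W^1_A$, so that $v$ indeed lies in the solution space for the reduced data and the conserved energy is genuinely available. Once autonomy is secured, the remaining energy estimate is routine.
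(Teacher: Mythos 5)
Your proposal is correct and follows exactly the route the paper intends: its proof of this theorem consists solely of the remark that the statement ``is a simple consequence of Lemma~\ref{timedep2ndordereq} and Lemma~\ref{conservationlaws}~(iii)'', i.e.\ pass to $v(t)=\exp((it/2)B)u(t)$, use commutativity to see that $A(t)=A+C+\tfrac14 B^2$ is time-independent and positive, and apply the energy estimate together with unitarity of $\exp((it/2)B)$. You have merely filled in the commutation and domain details that the paper leaves implicit.
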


\begin{proof}
The statement is a simple 
consequence of 
Lemma~\ref{timedep2ndordereq} and 
Lemma~\ref{conservationlaws}~(iii).  
\end{proof}

Coming back to the statement of Lemma~\ref{timedep2ndordereq}, 
for every $t \in I$, the corresponding $A(t)$ is a densely-defined, 
linear and self-adjoint operator in $X$, see, e.g., 
Lemma~\ref{elementarylemma}, in the Appendix. In particular, if 
$A + C + (1/4) \, B^2$ is positive, $A(t)$ is positive, too. For instance,
according to 
Lemma~\ref{A0barplusonefourthofBsquaredispositive}, this is true 
in the special case of the Klein-Gordon equation
(\ref{kleingordonequation}).
Hence in such case 
it might be expected that (\ref{timedep2ndorderequation}) 
for $u \in S_{I}$ implies that $\|u\|$ is not exponentially 
growing since this is the case if $A(t) = {\cal A}$ for every
$t \in I$, where ${\cal A}$ is a densely-defined, linear, positive
self-adjoint operator in $X$. In that case,
$u$ is given by 
\begin{equation} \label{rep}
u(t) = \cos((t-t_0){\cal A}^{1/2})u(t_0) + 
\frac{\sin((t-t_0){\cal A}^{1/2})}{{\cal A}^{1/2}} \, u^{\prime}(t_0)
\end{equation} 
for all $t_0, t \in I$, where $\cos((t-t_0){\cal A}^{1/2})$ and
$\sin((t-t_0){\cal A}^{1/2}/{\cal A}^{1/2})$ denote the bounded
linear operators that are associated by the functional calculus 
for ${\cal A}^{1/2}$ to the restriction of 
$\cos((t-t_0).\textrm{id}_{\mathbb{R}})$ and the restriction of
the continuous extension of  
$\sin ((t-t_0).\textrm{id}_{\mathbb{R}})/\textrm{id}_{\mathbb{R}}$
to $[0,\infty)$, 
respectively,
to the spectrum of 
${\cal A}^{1/2}$ \cite{beyer2}. Note that the solutions (\ref{rep})
are in particular bounded if ${\cal A}$ is strictly positive.
Unfortunately, this expectation 
is in general not true. A counterexample can be found already
on the level of finite dimensional Hilbert spaces.

\begin{figure} 
\centering
\includegraphics[width=7cm,height=7cm]{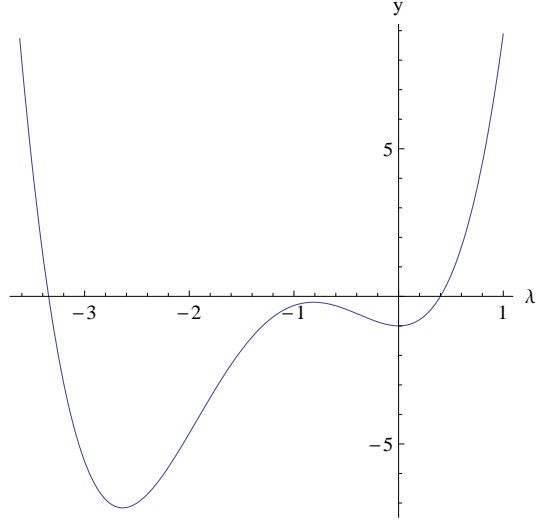}
\caption{Graph of $({\mathbb{R}} \rightarrow L(X,X), \lambda \mapsto
{\tilde A} - \lambda B - \lambda^2)$ for ${\tilde A}$ and $B$
from Example~\ref{counterexample}.}
\label{fig3}
\end{figure}
 
\begin{ex} \label{counterexample}
The example uses 
for the Hilbert space $X$ the space
${\mathbb{C}}^2$ equipped with the Euclidean 
scalar product, ${\tilde A}: = A + C$ and $B$ are the linear 
operators on ${\mathbb{C}}^2$ whose representations with respect 
to the canonical basis are given by the matrices
\begin{equation} \label{example}
\begin{pmatrix}
  1 & 0 \\
  0 & -1
\end{pmatrix} \, \, \textrm{and} \, \, 
\begin{pmatrix}
  23/10 & 1 \\
  1 & 23/10
\end{pmatrix} \, \,  , 
\end{equation}
respectively. An analysis shows that ${\tilde A}$ and $B$ are bounded linear and 
self-adjoint operators in $X$, ${\tilde A}$ is semibounded, $B$ is positive
and ${\tilde A} + (1/4) B^2$ is even 
{\it strictly positive}. 
Further, ${\tilde A}$ and $B$ do not commute. Finally, the operator 
polynomial $({\mathbb{C}} \rightarrow L(X,X), \lambda \mapsto
{\tilde A} - \lambda B - \lambda^2)$ has an eigenvalue with real part
$< 0$. Therefore, in this case, there is an exponentially growing 
solution of the corresponding equation (\ref{pde2}) and hence also 
of (\ref{timedep2ndorderequation}). Note that in this case, 
the corresponding family of operators (\ref{familyofoperators})
consists of strictly positive bounded self-adjoint linear operators whose 
spectra are bounded from below by a common strictly 
positive real number. Fig~\ref{fig3} gives the graph of
$p := ({\mathbb{R}} \rightarrow L(X,X), \lambda \mapsto
\det({\tilde A} - \lambda B - \lambda^2)) = \lambda^4 + 4.6 \lambda^3 + 
4.29 \lambda^2 - 1$ which suggests that 
there are precisely two distinct simple roots. Indeed, this is true. 
The proof proceeds by a discussion of the graph of $p$
using the facts that 
\begin{equation*}
p(-4) > 0 \, \, , \, \, p(-3) < 0 \, \, , \, \,
p(0) < 0 \, \, , \, \,  p(1) > 0 \, \, , 
\end{equation*} 
that 
the zeros of $p^{\,\prime}$ are given by 
\begin{equation*}
(- 69 - \sqrt{1329}\,)/40 \, \, , \, \, 
(-69 + \sqrt{1329}\,)/40 \, \,  , \, \, 0 
\end{equation*}
and that 
\begin{equation*}
p((-69 + \sqrt{1329}\,)/40) < 0 \, \, .
\end{equation*} 
Thus, $({\mathbb{C}} \rightarrow L(X,X), \lambda \mapsto
\det({\tilde A} - \lambda B - \lambda^2))$ has two distinct 
simple real roots and 
a pair of simple complex conjugate roots.
\end{ex}

\section{Discussion}
The mathematical investigation of the stability of 
Kerr space-time has started, but is still in the phase 
of the study of relevant model equations in a Kerr background.
The study of the solutions of the Klein-Gordon equation 
is expected to give important insight into the problem. 
\newline
\linebreak
In the case of 
the wave equation, i.e., for the case of vanishing mass $\mu$ of 
the scalar field,  
results point to the stability of the solutions. On the other
hand, inspection of the reduced Klein-Gordon equation, 
\ref{kleingordonequation}, 
reveals that 
the case of $\mu > 0$ originates from the case $\mu = 0$ by 
the addition of a positive bounded potential term
\begin{equation*} 
\mu^2 \, \frac{\Sigma }{\overline{\Sigma}}
\end{equation*} 
to the equation. If there
were no first order time derivative present in the equation,
from this alone it would be easy to prove 
that the stability of the solutions of the wave equation
implies the stability of the solutions of the 
Klein-Gordon equation for non-vanishing mass. 
\newline
\linebreak
Even in 
the presence of such a derivative, it is hard to believe 
that the addition of 
such term causes instability. In particular, the 
energy estimates in Lemma~\ref{conservationlaws}, 
indicate a stabilizing influence of such a term. 
On the other hand, so far, there is no result that 
would allow to draw such conclusion. 
\newline
\linebreak
The numerical 
results that indicate instability in the case $\mu \neq 0$ 
make quite special assumptions on the values of the rotational 
parameter of the black hole that do not make them look very 
trustworthy. They could very well be numerical 
artefacts. Moreover, the numerical investigation by Konoplya 
et al., \cite{konoplya}, does not find any unstable 
modes and contradicts all these investigations. Also 
the analytical results in this area are not accompanied by 
error estimates and therefore ultimately 
inconclusive. Still, apart from
\cite{zouros},
all these results are consistent with the estimate on $\mu$ 
in \cite{beyer1} and the 
improved 
estimate of this paper, above which 
the solutions of the reduced, by separation in the azimuth angle in 
Boyer-Lindquist coordinates, 
Klein-Gordon equation are stable.
\newline
\linebreak
It seems that the proof of the stability of the solutions 
of the wave equation in a Kerr background will soon be established.
The question of the stability of the massive scalar field 
in a Kerr background is still an open problem, with only few 
rigorous results available, and displays 
surprising mathematical subtlety. 
In particular, in this case 
standard 
tools of theoretical physical investigation, including 
numerical investigations, seem too imprecise for analysis.
Hence a rigorous mathematical investigation, like the one performed 
in this paper, seems to be 
enforced. 

\section{Appendix 1}

In the following, we give the proofs of the 
Lemmatas~\ref{killingfields} and \ref{timelikeregion} from 
Section~\ref{geometricalsetting}.
\newline
\linebreak
Proof of Lemma~\ref{killingfields}.

\begin{proof}
For this, let $s \in {\mathbb{R}}$. Then 
\begin{align*}
& g(\partial_{t} + s \, \partial_{\varphi},
\partial_{t} + s \,  \partial_{\varphi}) = 
g_{tt} + 2s \, g_{t\varphi} + s^2 \, 
g_{\varphi \varphi} \\
& =
1 - \frac{2Mr}{\Sigma} + 4 s \, 
\frac{Mar\sin^2\!\theta}{\Sigma} - s^2 \, 
\frac{\Delta \overline{\Sigma}}{\Sigma} 
\sin^2\!\theta \\
& = \frac{\Delta}{\Sigma} +
\frac{\sin^2\!\theta}{\Sigma} \left[
- a^2 + 4sMar - s^2 
(r^2 + a^2)^{2} + s^2 a^2 \Delta \sin^2\!\theta  
\right] \\
& =  \frac{\Delta}{\Sigma} +
\frac{\sin^2\!\theta}{\Sigma} \left[- (a - 2sMr)^2 + 4 s^2 M^2 r^2
- s^2 
(r^2 + a^2)^{2} + a^2 s^2 \Delta \sin^2\!\theta \, 
\right] \\
& =  \frac{\Delta}{\Sigma} +
\frac{\sin^2\!\theta}{\Sigma} \left[- (a - 2sMr)^2 - 
s^2 \Delta (\Delta + 4 Mr)
+ a^2 s^2 \Delta \sin^2\!\theta \, 
\right] \\
&  =  \frac{\Delta}{\Sigma} -
\frac{\sin^2\!\theta}{\Sigma} \left[(a - 2sMr)^2 + 
s^2 \Delta (\Delta + 4 Mr - a^2 \sin^2\!\theta ) \, 
\right] 
\end{align*}
Hence $g(\partial_{t} + s \, \partial_{\varphi},
\partial_{t} + s \,  \partial_{\varphi})$ has a positive extension 
to the boundary of $\Omega_{s}$ if and only if 
\begin{equation*}
s = \frac{a}{2Mr_{+}} \, \, .
\end{equation*}
In this case, 
\begin{align*}
& (a - 2sMr)^2 + 
s^2 \Delta (\Delta + 4 Mr - a^2 \sin^2\!\theta ) \\
& 
= \frac{a^2}{r_{+}^2} \, (r - r_{+})^2 + 
\frac{a^2}{4M^2r_{+}^2} \,  \Delta (\Delta + 4 Mr - a^2 \sin^2\!\theta ) \\
& = \frac{a^2}{4M^2r_{+}^2} \left[ \, 
4 M^2 (r - r_{+})^2 + \Delta (\Delta + 4 Mr - a^2 \sin^2\!\theta )
\right] \\
& = \frac{a^2 \Delta}{4M^2r_{+}^2} \left[ \, 
4 M^2 \, \frac{r - r_{+}}{r-r_{-}} 
+ \Delta + 4 Mr - a^2 \sin^2\!\theta 
\right] \\
\end{align*}
and hence 
\begin{align*}
& g(\partial_{t} + s \, \partial_{\varphi},
\partial_{t} + s \,  \partial_{\varphi}) \\
& = 
\frac{\Delta}{4M^2r_{+}^2\Sigma} \left[ 4M^2r_{+}^2 
- a^2 \sin^2\!\theta \left( 4 M^2 \, \frac{r - r_{+}}{r-r_{-}} 
+ \Delta + 4 Mr - a^2 \sin^2\!\theta \right) 
\right] \\
& = \frac{\Delta}{4M^2r_{+}^2\Sigma} \left[ 
(2 M r_{+} -  a^2 \sin^2\!\theta)^2 
- a^2 (r - r_{+}) \, \sin^2\!\theta 
\left( \, \frac{4 M^2}{r-r_{-}} 
+ r - r_{-} + 4 M \right) 
\right] \\
& =  \frac{\Delta}{4M^2r_{+}^2\Sigma} \left[ 
(2 M r_{+} -  a^2 \sin^2\!\theta)^2 
- a^2 \Delta \, \sin^2\!\theta 
\left( 1 + \, \frac{2M}{r-r_{-}}\right)^2 
\right]\, \, . 
\end{align*}
\end{proof}

Proof of Lemma~\ref{timelikeregion}.

\begin{proof}
For this, let $(r,\theta) \in \Omega_{e1}$.
Then 
\begin{equation*}
\triangle(r,\theta) <  a^2 \sin^2\!\theta 
\end{equation*}
and
\begin{align*}
& (2 M r_{+} -  a^2 \sin^2\!\theta)^2 
- a^2 \Delta(r,\theta) \, \sin^2\!\theta 
\left( 1 + \, \frac{2M}{r-r_{-}}\right)^2 \\
& =  a^4 \sin^4\!\theta  - \left[ 4 M r_{+} +
 \Delta(r,\theta) 
\left( 1 + \, \frac{2M}{r-r_{-}}\right)^2
\right] a^2 \sin^2\!\theta + 4 M^2 r_{+}^2 \\
& >   (\Delta(r,\theta))^2  - \left[ 4 M r_{+} +
 \Delta(r,\theta) 
\left( 1 + \, \frac{2M}{r-r_{-}}\right)^2
\right] a^2 + 4 M^2 r_{+}^2 \\
& = (\Delta(r,\theta))^2  - a^2
 \Delta(r,\theta)
\left( 1 + \, \frac{2M}{r-r_{-}}\right)^2
 + 4 M r_{+}(M r_{+} - a^2) \\
& = \left[\Delta(r,\theta) - \frac{a^2}{2} \, \left( 1 + \, \frac{2M}{r-r_{-}}\right)^2
\right]^2 -  \frac{a^4}{4} \, \left( 1 + \, \frac{2M}{r-r_{-}}\right)^4
+ 4 M r_{+}(M r_{+} - a^2) \\
& \geq \left[\Delta(r,\theta) - \frac{a^2}{2} \, \left( 1 + \, \frac{2M}{r-r_{-}}\right)^2
\right]^2 + 4 \left[ - \frac{a^4 r_{+}^4}{(r_{+}-r_{-})^4}
+ M r_{+}(M r_{+} - a^2) \right] \, \, .
\end{align*} 
Hence it follows that $(r,\theta) \in \Omega_{e2}$ if 
\begin{align*}
& \frac{a^4 r_{+}^4}{(r_{+}-r_{-})^4} +  a^2 M r_{+} - M^2 r_{+}^2 \\
& 
= \frac{r_{+}^4}{(r_{+}-r_{-})^4} 
\left[ a^4 + \frac{M (r_{+}-r_{-})^4}{r_{+}^3} \, a^2
- \frac{M^2(r_{+}-r_{-})^4}{r_{+}^2}
\right] \leq  0 
\, \, .
\end{align*}
The latter is the case if and only if 
\begin{align*}
a^2 \leq \frac{2Mr_{+}}{1
+ \sqrt{1 + \frac{4r_{+}^4}{(r_{+}-r_{-})^4}}} \, \, .
\end{align*}
Further, 
\begin{align*}
& \frac{2Mr_{+}}{1
+ \sqrt{1 + \frac{4r_{+}^4}{(r_{+}-r_{-})^4}}} \geq 
\frac{Mr_{+}}{1
 + \frac{r_{+}^2}{(r_{+}-r_{-})^2}} \geq  
\frac{M^2}{1
 + \frac{M^2}{M^2 -a^2}} = \frac{M^2(M^2 - a^2)}{2M^2 -a^2} \\
& \geq \frac{1}{2} \, (M^2 - a^2) \, \, .
\end{align*}
Hence if 
\begin{equation*}
a^2 \leq \frac{1}{2} \, (M^2 - a^2) \, \, , 
\end{equation*}
or, equivalently, if condition (\ref{condition}) is satisfied,
it follows that $(r,\theta) \in \Omega_{e2}$. 
\end{proof}

\section{Appendix 2}
In the following, we give the omitted proofs from 
Sections~\ref{basicproperties} and \ref{formulation}.
\newline
\linebreak
Proof of Lemma~\ref{conservationlaws}.

\begin{proof}
`(i)': For this, let $t \in I$ and $h \in {\mathbb{R}}$ such 
that $t + h \in I$. Then 
\begin{align*}
& \frac{j_{u,v}(t+h) - j_{u,v}(t)}{h} \\
& = h^{-1} \left[ 
\braket{u(t+h)|v^{\prime}(t+h)} - \braket{u^{\prime}(t+h)|v(t+h)} + 
i \braket{u(t+h)|B v(t+h)} \right. \\
& \left. \qquad \, \, \, \, \, \, \, \, 
- \braket{u(t)|v^{\prime}(t)} + \braket{u^{\prime}(t)|v(t)} -
i \braket{u(t)|B v(t)} \right] \\
& =  h^{-1}\left[ 
\braket{u(t+h) - u(t)|v^{\prime}(t+h)} + 
\braket{u(t)|v^{\prime}(t+h) - v^{\prime}(t)} \right. \\
& \left.  \qquad\, \, \, \, \, \, \, \,
- \braket{u^{\prime}(t+h)|v(t+h) - v(t)} 
- \braket{u^{\prime}(t+h) - u^{\prime}(t)|v(t)} \right. \\
& \left. \qquad\, \, \, \,  \, \, \, \, +
i \braket{u(t+h) - u(t)|B v(t+h)} + i \braket{B u(t)|v(t+h) -v(t)}
\right] \, \, . 
\end{align*}
Hence it follows that $j_{u,v}$ is differentiable in $t$ with derivative 
\begin{align*}
j_{u,v}^{\prime}(t) & = \braket{u(t)|(v^{\prime})^{\prime}(t)} -
 \braket{(u^{\prime})^{\prime}(t)|v(t)} + 
i \braket{u^{\prime}(t)|B v(t)} + i \braket{B u(t)|v^{\prime}(t)} \\
&  = \braket{u(t)|(v^{\prime})^{\prime}(t) + i B v^{\prime}(t)} -
 \braket{(u^{\prime})^{\prime}(t)+i B u^{\prime}(t)|v(t)} \\
& = - \braket{u(t)|(A+C)v(t)} + \braket{(A+C)u(t)|v(t)}= 0 \, \, .
\end{align*}
From the latter, we conclude that the derivative of 
$j_{u,v}$ vanishes and hence that  
$j_{u,v}$ is a constant function. 
\newline
`(ii)': For this, again, let $t \in I$ and $h \in {\mathbb{R}}$ such 
that $t + h \in I$. Further, let $\tilde{A} := A + C$.  
Then 
\begin{align*}
& \frac{E_{u}(t+h) - E_{u}(t)}{h} \\
& = h^{-1} \left[ 
\braket{u^{\prime}(t+h)|u^{\prime}(t+h)} + \braket{u(t+h)|{\tilde A} u(t+h)}
- \braket{u^{\prime}(t)|u^{\prime}(t)} - \braket{u(t)|{\tilde A} u(t)}
\right] \\
&  = h^{-1} \left[ 
\braket{u^{\prime}(t+h)-u^{\prime}(t)|u^{\prime}(t+h)}  
+ \braket{u^{\prime}(t)|u^{\prime}(t+h)-u^{\prime}(t)} \right. \\
& \left. \qquad\, \, \, \,  \, \, \, \,
+ \braket{u(t+h) - u(t)|{\tilde A} u(t+h)} + 
\braket{u(t)|{\tilde A} (u(t+h)-u(t))} 
\right] \\
&  = h^{-1} \left[ 
\braket{u^{\prime}(t+h)-u^{\prime}(t)|u^{\prime}(t+h)}  
+ \braket{u^{\prime}(t)|u^{\prime}(t+h)-u^{\prime}(t)} \right. \\
& \left. \qquad\, \, \, \,  \, \, \, \,
+ \braket{A^{1/2}(u(t+h) - u(t))|A^{1/2} u(t+h)} + 
\braket{u(t+h) - u(t)|C u(t+h)} \right. \\
& \left. \qquad\, \, \, \,  \, \, \, \, + 
\braket{{\tilde A}u(t)|u(t+h)-u(t)} 
\right] \, \, .
\end{align*}
Hence it follows that $E_{u}$ is differentiable in $t$ with derivative 
\begin{align*}
& \braket{(u^{\prime})^{\prime}(t)|u^{\prime}(t)}  
+ \braket{u^{\prime}(t)|(u^{\prime})^{\prime}(t)}
+ \braket{A^{1/2}u^{\prime}(t)|A^{1/2} u(t)} + 
\braket{u^{\prime}(t)|C u(t)} \\
& + 
\braket{(A+C)u(t)|u^{\prime}(t)} \\
& =  - \braket{i B u^{\prime}(t) + (A + C) u(t)|u^{\prime}(t)}  
- \braket{u^{\prime}(t)|i B u^{\prime}(t) + (A + C) u(t)} \\
& \quad 
+ \braket{u^{\prime}(t)|(A+C)u(t)} + 
\braket{(A+C)u(t)|u^{\prime}(t)} \\
& =  - \braket{i B u^{\prime}(t)|u^{\prime}(t)}  
- \braket{u^{\prime}(t)|i B u^{\prime}(t)} = 0 \, \, . 
\end{align*}
From the latter, we conclude that the derivative of 
$E_{u}$ vanishes and hence that  
$E_{u}$ is a constant function.
\newline
`(iii)': Since 
$A + C$ is semibounded with lower bound $\gamma \in {\mathbb{R}}$,
\begin{equation*}
\braket{\xi|(A + C)\xi} \geq \gamma \|\xi\|^2  
\end{equation*}
for every $\xi \in D(A)$.  Hence it follows by (ii) that 
\begin{equation} \label{energyinequality}
\|u^{\prime}(t)\|^2 + \gamma \|u(t)\|^2 = 
E_{u} - (\braket{u(t)|(A + C)u(t)} - \gamma  \|u(t)\|^2) \leq 
E_{u} 
\end{equation}
for every $t \in {\mathbb{R}}$. 
If $\gamma = 0$, the latter implies that 
\begin{equation*}
\|u^{\prime}(t)\| \leq E_{u}^{1/2} 
\end{equation*}
for every $t \in I$. Hence it follows by weak integration
in $X$, e.g., see Theorem~$3.2.5$ in \cite{beyer2}, that
\begin{equation*}
\|u(t_2) - u(t_1)\| = \left \|\, \int_{(t_1,t_2)} u^{\, \prime}(t) \, dt 
\, \right\|
\leq \int_{(t_1,t_2)} \|u^{\, \prime}(t)\| \, dt \leq  E
_{u}^{1/2}(t_2-t_1) \, \, ,
\end{equation*}
where $t_1, t_2 \in I$ are such that $t_1 < t_2$, 
and hence that 
\begin{equation*}
\|u(t_2)\| \leq \|u(t_1)\| + E_{u}^{1/2}(t_2-t_1) \, \, .
\end{equation*}
For the weak integration, note
that the inclusion of $W^1_{A}$ into $X$ is continuous. 
If $\gamma > 0$, it follows from (\ref{energyinequality}) along with 
the parallelogram identity for elements of $X$
that
\begin{align*}
\|e^{-\gamma^{1/2}t}
(e^{\gamma^{1/2} .{\textrm{id}}_{\mathbb{R}}}
.u)^{\prime}(t)\|^2 = 
\|u^{\prime}(t) + \gamma^{1/2} u(t)\|^2 
\leq  2 (\, \|u^{\prime}(t)\|^2 + \|\gamma^{1/2} u(t)\|^2 \, ) 
\leq 2 E_{u}  
\end{align*}
and hence that 
\begin{equation*}
\|(e^{\gamma^{1/2} .{\textrm{id}}_{\mathbb{R}}}
.u)^{\prime}(t)\| \leq (2 E_{u})^{1/2} 
e^{\gamma^{1/2} t}
\end{equation*}
for $t \in I$.
Hence it follows by weak integration in 
$X$ that 
\begin{align*}
& \|e^{\gamma^{1/2}t_2} 
u(t_2) - e^{\gamma^{1/2}t_1} u(t_1)\| 
= \left\|\int_{(t_1,t_2)} (e^{\gamma^{1/2}.{\textrm{id}}_{\mathbb{R}}}
.u)^{\prime}(t) \, dt
\right\| \\
& \leq \int_{(t_1,t_2)} \|(e^{\gamma^{1/2}.{\textrm{id}}_{\mathbb{R}}}
.u)^{\prime}(t)\| \, dt 
\leq (2 E_{u}/\gamma)^{1/2} \left(e^{\gamma^{1/2}t_2} - 
e^{\gamma^{1/2}t_1} \right)
\end{align*}
for all $t_1, t_2 \in I$ such that $t_1 < t_2$. The latter implies 
that
\begin{equation*}
\|e^{\gamma^{1/2}t_2}u(t_2)\| \leq \|e^{\gamma^{1/2}t_1}u(t_1)\| +
(2 E_{u}/\gamma)^{1/2} \left(e^{\gamma^{1/2}t_2} - 
e^{\gamma^{1/2}t_1} \right) \, \, .
\end{equation*} 
Hence
\begin{equation*}
\|u(t_2)\| \leq (2 E_{u}/\gamma)^{1/2} \left(1 - 
e^{-\gamma^{1/2}(t_2 - t_1)} \right) + 
e^{- \gamma^{1/2}(t_2-t_1)} \|u(t_1)\| 
 \, \, .
\end{equation*} 
If $\gamma < 0$, it follows from (\ref{energyinequality})
that 
\begin{equation*}
\|u^{\prime}(t)\|^2 \leq E_u - \gamma  \|u(t)\|^2 \leq
|E_u| + a \, \|u(t)\|^2 \, \, , 
\end{equation*} 
for every $t \in I$, where $a := - \gamma  > 0$.
The latter implies that 
\begin{equation*}
\|u^{\prime}(t)\| \leq  |E_u|^{1/2} + a^{1/2} \, \|u(t)\|
\end{equation*}
for every $t \in I$. Hence it follows by weak integration in 
$X$ that 
\begin{align*}
& \|u(t_2) - u(t_1)\| = \left\|\int_{(t_1,t_2)} u^{\, \prime}(t) \, dt
\right\| \leq \int_{(t_1,t_2)} \|u^{\, \prime}(t)\| \, dt \\
& 
\leq |E_u|^{1/2}(t_2 - t_1) + a^{1/2} \int_{(t_1,t_2)} \|u(t)\| \, dt
\, \, ,   
\end{align*}
where $t_1, t_2 \in I$ are such that $t_1 < t_2$, and 
\begin{equation*}
\|u(t_2)\| \leq  \|u(t_1)\| + 
|E_u|^{1/2}(t_2 - t_1) + a^{1/2} \int_{(t_1,t_2)} \|u(t)\| \, dt \, \, .
\end{equation*}
By help of the generalized Gronwall inequality from 
Lemma 3.1 in \cite{hale}, from the latter we conclude that  
\begin{equation*}
\|u(t_2)\| \leq
[\,\|u(t_1)\| + |E_u|^{1/2}(t_2 - t_1)\,] e^{a^{1/2} (t_2-t_1)}
\end{equation*}   
for $t_1 \in I$ and $t_2 \in I$ such that $t_1 < t_2$.
\newline
`(iv)': For this, we define $w := v - u$. Then $w$ is an element 
of $S_{I}$ such that $w(t_0) = w^{\prime}(t_0) = 0$. 
This implies that  
\begin{equation*}
E_{w}(t) := \|w^{\prime}(t)\|^2 + \braket{w(t)|(A + C)w(t)}
\end{equation*}
for every $t \in I$ is constant of value $0$. Hence 
we conclude from (iii) that $w(t) = 0_{X}$ for all 
$t \in I$ and therefore that $v = u$.
 
\end{proof}

Proof of Corollary~\ref{otherconservedquantities}.

\begin{proof} We 
define $v : I \rightarrow W^1_{A}$ by 
\begin{equation*}
v(t) := e^{ist} u(t)
\end{equation*}
for every $t \in I$. Then $v$ is differentiable 
with $\textrm{Ran} \, v \subset D(A)$ and also
$v^{\prime} : I \rightarrow X$ is differentiable such that 
\begin{equation*}
v^{\prime}(t) =  e^{ist} [u^{\prime}(t) + is u(t)] \, \, , \, \, 
(v^{\prime})^{\prime}(t) =  e^{ist} [(u^{\prime})^{\prime}(t) + 2is 
u^{\prime}(t) - s^2 u(t)]
\end{equation*} 
for every $t \in I$. Further, 
\begin{align*}
& (v^{\prime})^{\prime}(t) + i (B - 2s) v^{\prime}(t) + 
(A + C + s B -s^2) v(t) \\
&  = e^{ist} [(u^{\prime})^{\prime}(t) + 2is 
u^{\prime}(t) - s^2 u(t) + i (B - 2s)(u^{\prime}(t) + is u(t))  \\
& 
\qquad \quad + (A + C + s B -s^2) u(t)] \\
& = e^{ist} [(u^{\prime})^{\prime}(t) + 2is 
u^{\prime}(t) - s^2 u(t) + i B u^{\prime}(t) 
- 2 i s u^{\prime}(t) - s B u(t) + 2 s^2 u(t) \\
& 
\qquad \quad + (A + C + s B -s^2) u(t)] \\
& =  e^{ist} [(u^{\prime})^{\prime}(t) + i B u^{\prime}(t) + 
(A + C) u(t)] = 0 
\end{align*}
for every $t \in I$. Note that $(X,A,B - 2s, C + s B - s^2)$ 
satisfy Assumptions~\ref{XA},~\ref{BC}. 
Hence it follows by Lemma~\ref{conservationlaws}
that the function $E_{v} : I \rightarrow {\mathbb{R}}$, 
defined by   
\begin{align*}
& E_{v}(t) := \|v^{\prime}(t)\|^2 + \braket{v(t)|(A + C + s B - s^2)v(t)} \\
& = \|u^{\prime}(t) + is u(t)\|^2 + \braket{u(t)|(A + C + s B - s^2)u(t)}
\end{align*}
for every $t \in I$, is constant. 
If, in addition,
$A + C + s (B - s)$ is semibounded with lower bound 
$\gamma \in {\mathbb{R}}$, then 
\begin{equation*}
\|v(t_2)\| \leq
\begin{cases} 
[\,\|v(t_1)\| + |E_{v}|^{1/2}(t_2 - t_1)\,] e^{|\gamma|^{1/2} 
\, (t_2-t_1)} &
\mbox{if } \gamma < 0 \, \, , \\
\|v(t_1)\| + E_{v}^{1/2}
\, (t_2-t_1) & \mbox{if } \gamma = 0 \, \, , \\
(2 E_{v}/\gamma)^{1/2} \left( 1 - e^{-\gamma^{1/2}(t_2-t_1)} \right)+ 
 \|v(t_1)\| e^{- \gamma^{1/2}(t_2-t_1)} & \mbox{if } \gamma > 0 \, \, ,
\end{cases} 
\end{equation*}   
for $t_1, t_2 \in I$ such that $t_1 \leq t_2$.
\end{proof}

Proof of Lemma~\ref{ConnectionToKillingVf}.

\begin{proof}
First, we notice that the only non-vanishing components of 
$(g^{ab})_{(a,b) \in \{t,r,\theta,\varphi\}^2}$ are given by 
\begin{align*}
& g^{tt} = \frac{\overline{\Sigma}}{\Sigma} \, \, , \, \, 
g^{t\varphi} = g^{\varphi t} = \frac{2Mar}{\triangle \Sigma} 
\, \, , \, \, g^{rr} = - \frac{\triangle}{\Sigma} \, \, , \, \, 
g^{\theta \theta} =  - \frac{1}{\Sigma} \, \, , \, \, \\
& g^{\varphi \varphi} = - \frac{1}{\triangle \sin^2\!\theta} \, 
\left(1 -\frac{2Mr}{\Sigma} \right) \, \, .
\end{align*}
Further, we notice that 
\begin{align*}
g^{tt} = - \frac{g_{\varphi \varphi}}{\rho} \, \, , \, \, 
g^{t\varphi} =  \frac{g_{t \varphi}}{\rho} \, \, , \, \, 
g^{\varphi \varphi} = - \frac{g_{tt}}{\rho} \, \, , 
\end{align*} 
where 
\begin{equation*}
\rho := - [\, g_{tt} g_{\varphi \varphi} - (g_{t\varphi})^2\,] 
= \triangle \sin^2\! \theta \, \, . 
\end{equation*}
Hence
\begin{align*}
\frac{1}{g^{tt}} \, \Box = \, & \partial_{t}^2 + 
2 \, \frac{g^{t \varphi}}{g^{tt}} \, \partial_{t} \partial_{\varphi}
+ \frac{g^{\varphi \varphi}}{g^{tt}} \, \partial_{\varphi}^2 \\
& +  \frac{1}{g^{tt}} \left[\frac{1}{\sqrt{-|g|}} \, \partial_{r} 
\sqrt{-|g|} \, g^{rr} 
\partial_{r} + \frac{1}{\sqrt{-|g|}} \, \partial_{\theta} 
\sqrt{-|g|} \, g^{\theta \theta} \partial_{\theta} \right]  \\
=  \, & \partial_{t}^2 +
2 \, 
\frac{g_{t \varphi}}{-g_{\varphi \varphi}} \, \partial_{t} \partial_{\varphi}
- \frac{g_{tt}}{-g_{\varphi \varphi}} \, \partial_{\varphi}^2 \\
& +  \frac{1}{g^{tt}} \left[\frac{1}{\sqrt{-|g|}} \, \partial_{r} 
\sqrt{-|g|} \, g^{rr} 
\partial_{r} + \frac{1}{\sqrt{-|g|}} \, \partial_{\theta} 
\sqrt{-|g|} \, g^{\theta \theta} \partial_{\theta} \right] \, \, . 
\end{align*}
As a consequence, 
\begin{equation*}
A_0 f  = \frac{1}{g^{tt}} \left[\frac{1}{\sqrt{-|g|}} \, \partial_{r} 
\sqrt{-|g|} \, g^{rr} 
\partial_{r} + \frac{1}{\sqrt{-|g|}} \, \partial_{\theta} 
\sqrt{-|g|} \, g^{\theta \theta} \partial_{\theta} \right] f + 
\frac{ m^2 g_{tt} + \mu^2 \rho}{-g_{\varphi \varphi}} f
\end{equation*}
for every $f \in D(A_0)$. Finally, it follows that ,  
\begin{align*}
& [\,A_0 + m s B  - (ms)^2\,] f =  
A_0 f + m s \, 2 m \, \frac{g^{t \varphi}}{g^{tt}} f - (ms)^2 f \\
& = \frac{1}{g^{tt}} \left[\frac{1}{\sqrt{-|g|}} \, \partial_{r} 
\sqrt{-|g|} \, g^{rr} 
\partial_{r} + \frac{1}{\sqrt{-|g|}} \, \partial_{\theta} 
\sqrt{-|g|} \, g^{\theta \theta} \partial_{\theta} \right] f \\
& \, \, \quad + 
\frac{m^2}{-g_{\varphi \varphi}} 
\left( g_{tt}  + 
2 s \, g_{t \varphi} + s^2 
g_{\varphi \varphi} \right) f  + \frac{\mu^2 \rho}{-g_{\varphi \varphi}} \\
& = \frac{1}{g^{tt}} \left[\frac{1}{\sqrt{-|g|}} \, \partial_{r} 
\sqrt{-|g|} \, g^{rr} 
\partial_{r} + \frac{1}{\sqrt{-|g|}} \, \partial_{\theta} 
\sqrt{-|g|} \, g^{\theta \theta} \partial_{\theta} \right] f + 
\frac{m^2 g(\xi,\xi) + \mu^2 \rho}{-g_{\varphi \varphi}} f \, \, . 
\end{align*}
for every $f \in D(A_0)$.
\end{proof} 

Proof of Lemma~\ref{timedep2ndordereq}.

\begin{proof}

First, if $D \in L(X,X)$ and 
$f : I \rightarrow X$ is differentiable in $t \in I$ and 
$h \in {\mathbb{R}}^{*}$ such that $t + h \in I$, it follows
that  
\begin{align*}
& \frac{1}{h} \, \left[ \exp((t+h)D) f(t+h)  - 
\exp(tD) f(t) \right] = 
\exp(tD)  \, \frac{1}{h} \left[ \exp(hD) f(t+h)  - 
f(t) \right] \\
& = \exp(tD) \left[ \exp(h D) \, \frac{1}{h}[f(t+h) - f(t)]  
+ \frac{1}{h}(\exp(h D) f(t) - f(t)) \right] \\
& = \exp(tD) \left[ \exp(h D) \, 
\left(\frac{1}{h}[f(t+h) - f(t)] - f^{\prime}(t) \right) + 
 \exp(h D)f^{\prime}(t) \right. \\
& \left. \qquad \qquad \quad \,  
+ \frac{1}{h}(\exp(h D) f(t) - f(t)) \right]
\end{align*}
and hence that $g := (I \rightarrow X, s \mapsto \exp(sD) f(s))$ is 
differentiable in $t$ with derivative 
\begin{equation*}
\exp(tD) [ f^{\prime}(t)
+ D f(t)] \, \, .
\end{equation*}
In particular, this implies, if $f$ is twice 
differentiable in $t \in I$, that $g$ is twice differentiable 
in $t$ with second derivative 
\begin{equation*}
\exp(tD) [f^{\prime \prime}(t)
+ 2 D f^{\, \prime}(t) 
+ D^2 f(t)] \, \, .
\end{equation*}
Applying the previous auxiliary result to $D = (i/2)B$
proves that $v$ is twice differentiable. Further, 
from the definition of $v$, it follows that 
\begin{equation*}
u(t) = \exp(-(it/2)B) v(t) \, \, , \, \,
\end{equation*}
for every $t \in I$. 
Application of the auxilary results above to $D = -(i/2)B$ 
leads to
\begin{align*} 
& u^{ \prime}(t) = \exp(-(it/2)B) \bigg( v^{ \prime}(t) - 
\frac{i}{2} \, B \, v(t) 
\bigg) \, \, , \\
& u^{ \prime \prime}(t) = \exp(-(it/2)B) \bigg( v^{ \prime \prime}(t) - 
i B \, v^{ \prime}(t) - \frac{1}{4} \, B^2 v(t) 
\bigg) \, \, .
\end{align*}
Hence it follows from (\ref{pde2}) that 
\begin{align*}
& 0 = u^{ \prime \prime}(t) + i B  u^{ \prime}(t)
+ {\tilde A} u(t) \\
& =  \exp(-(it/2)B) \bigg( v^{ \prime \prime}(t) - 
i B \, v^{ \prime}(t) - \frac{1}{4} \, B^2 v(t) + 
i B v^{ \prime}(t) - i B  
\frac{i}{2} \, B \, v(t) \\
& \qquad \qquad \qquad \qquad \,  
+ \exp((it/2)B) {\tilde A} \exp(-(it/2)B) v(t) 
\bigg) \\
& = \exp(-(it/2)B) \bigg( v^{ \prime \prime}(t) 
+ \frac{1}{4} \, B^2 v(t) + \exp((it/2)B){\tilde A} \exp(-(it/2)B) v(t)
\bigg) \\
& = \exp(-(it/2)B) \bigg[ v^{ \prime \prime}(t) 
+ \exp((it/2)B) 
\bigg({\tilde A} + \frac{1}{4} \, B^2 \bigg) \exp(-(it/2)B) v(t)
\bigg] \, \, , 
\end{align*}
where ${\tilde A} := A + C$. 
\end{proof} 

In the following, we give some abstract lemmatas that are 
applied in the text. For the convenience of the reader,
corresponding proofs are added.  

\begin{lemma} \label{elementarylemma}
Let $(X,\braket{\,|\,})$ be a Hilbert space over 
${\mathbb{K}} \in \{{\mathbb{R}},{\mathbb{C}}\}$, 
$A$ a densely-defined, linear and self-adjoint operator 
in $X$ and $U \in L(X,X)$ be unitary. Then, 
$A_{U} := 
U \circ A \circ U^{-1}$
is a densely-defined, linear and self-adjoint operator 
in $X$. Further, if $D \leq D(A)$ is a core for $A$, then 
$U(D)$ is a core for $U \circ A \circ U^{-1}$. Also, 
if $A$ is positive, then $U \circ A \circ U^{-1}$ is positive,
too. 
\end{lemma}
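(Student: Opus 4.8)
The plan is to exploit that $U$ is an isometric isomorphism of $X$ — bounded, bijective, with $U^{-1} = U^{*}$ also bounded, and inner-product preserving — so that every structural property of $A$ transports to $A_U := U \circ A \circ U^{-1}$ by conjugation. First I would pin down the domain: since $D(A_U) = \{x \in X : U^{-1}x \in D(A)\} = U(D(A))$ and $U$ is a homeomorphism of $X$, the density of $D(A)$ gives at once the density of $U(D(A))$; linearity is immediate as a composition of linear maps. Throughout I will use that for $\eta = U\xi$ with $\xi \in D(A)$ one has $A_U\eta = U A U^{*} U\xi = U A \xi$.

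For self-adjointness I would compute $(A_U)^{*}$ directly from the definition of the adjoint. A vector $y$ lies in $D((A_U)^{*})$ iff there is $z \in X$ with $\braket{A_U(U\xi)|y} = \braket{U\xi|z}$ for all $\xi \in D(A)$; by the inner-product preservation of $U$ this rewrites as $\braket{A\xi|U^{*}y} = \braket{\xi|U^{*}z}$ for all $\xi \in D(A)$. By the very definition of $A^{*}$ together with the hypothesis $A^{*} = A$, this holds precisely when $U^{*}y \in D(A)$ and $A U^{*}y = U^{*}z$, i.e. iff $y \in U(D(A)) = D(A_U)$ and $z = U A U^{*}y = A_U y$. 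Hence $D((A_U)^{*}) = D(A_U)$ and $(A_U)^{*} = A_U$.

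For the core statement I would pass to graphs. Let $\Phi : X \times X \to X \times X$ be given by $\Phi(x,y) := (Ux, Uy)$; this is a homeomorphism of $X \times X$ (bounded bijective with bounded inverse), so it commutes with topological closure. From $A_U|_{U(D)}(U\xi) = U A\xi$ I get $\mathrm{graph}(A_U|_{U(D)}) = \Phi(\mathrm{graph}(A|_D))$ and likewise $\mathrm{graph}(A_U) = \Phi(\mathrm{graph}(A))$. Taking closures, $\overline{\mathrm{graph}(A_U|_{U(D)})} = \Phi\big(\overline{\mathrm{graph}(A|_D)}\big) = \Phi(\mathrm{graph}(A)) = \mathrm{graph}(A_U)$, where the middle equality is exactly the hypothesis that $D$ is a core for the closed (self-adjoint) operator $A$, and the last equality uses that $A_U$ is itself closed. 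Thus $U(D)$ is a core for $A_U$. Positivity is then immediate: for $\eta = U\xi \in D(A_U)$ one has $\braket{\eta|A_U\eta} = \braket{U\xi|U A\xi} = \braket{\xi|A\xi} \geq 0$.

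The only genuinely delicate step is the adjoint computation: one must track the domain of $(A_U)^{*}$ with care, since adjoints of compositions involving an unbounded middle factor need not factor naively, and in general only give inclusions. It is precisely the boundedness of $U$ and $U^{*}$ together with $A^{*} = A$ that upgrades both inclusions to the equality $D((A_U)^{*}) = D(A_U)$. Everything else reduces to the single fact that $U$ is an isometric isomorphism and therefore transports density, graph-closures, and non-negativity verbatim.
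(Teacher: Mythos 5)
Your proposal is correct and follows essentially the same route as the paper: identify $D(A_U)=U(D(A))$, transport density, compute the adjoint using that $U$ preserves inner products together with $A^{*}=A$, transport core and positivity by conjugation. The only cosmetic differences are that you characterize $D((A_U)^{*})$ in a single bidirectional computation where the paper first proves symmetry and then the inclusion $(A_U)^{*}\subset A_U$, and that you phrase the core argument via the graph homeomorphism $(x,y)\mapsto(Ux,Uy)$ where the paper uses approximating sequences.
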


\begin{proof}
First, we note that $D(U \circ A \circ U^{-1}) = U(D(A))$.
Since $D(A)$ is dense in $X$, for $\xi \in X$, there is a sequence 
of $\xi_1,\xi_2,\dots$ of elements of $D(A)$ such that 
\begin{equation*}
\lim_{\nu \rightarrow \infty} \xi_{\nu} = U^{-1} \xi \, \, . 
\end{equation*} 
Hence also 
\begin{equation*}
\lim_{\nu \rightarrow \infty} U \xi_{\nu} = \xi \, \, . 
\end{equation*} 
As a consequence, $U \circ A \circ U^{-1}$ is densely-defined. Also, 
as composition of linear maps, $U \circ A \circ U^{-1}$ is linear.
In addition, for $\xi, \eta \in D(A)$, it follows that 
\begin{equation*}
\braket{U \xi|U \circ A \circ U^{-1}U \eta} = 
\braket{\xi|A \eta} = \braket{A \xi|\eta} =
\braket{U \circ A \circ U^{-1}U \xi|U \eta} 
\end{equation*} 
and hence that $U \circ A \circ U^{-1}$ is symmetric. 
Further, if $\xi \in D((U \circ A \circ U^{-1})^{*})$, then
\begin{equation*}
\braket{(U \circ A \circ U^{-1})^{*}\xi|U \eta} = 
\braket{\xi|(U \circ A \circ U^{-1}) U \eta} =
\braket{U^{-1} \xi | A \eta} 
\end{equation*}
for every $\eta \in D(A)$. Hence $\xi \in U(D(A))$, 
and 
\begin{equation*}
\braket{U^{-1} \xi | A \eta} = \braket{ A U^{-1} \xi | \eta}
=  \braket{U A U^{-1} \xi | U \eta}
\end{equation*} 
for every $\eta \in D(A)$. Since $U(D(A))$ is dense in $X$, this implies
that $(U \circ A \circ U^{-1})^{*} \xi = U A U^{-1} \xi$. 
As a consequence, 
\begin{equation*}
U A U^{-1} \supset (U \circ A \circ U^{-1})^{*} \, \, . 
\end{equation*} 
Hence it follows that 
$U \circ A \circ U^{-1}$ is self-adjoint. Further, let $D \leq D(A)$ 
be a core 
for $A$. As a consequence, for every $\xi \in D(A)$ there is a 
sequence $\xi_{1},\xi_{2},\dots$ in $D$ such that 
\begin{equation*}
\lim_{\nu \rightarrow \infty} \xi_{\nu} = \xi \, \, , \, \, 
\lim_{\nu \rightarrow \infty} A \xi_{\nu} = A \xi \, \, .
\end{equation*} 
Hence $U \xi_{1},U \xi_{2},\dots$ is a sequence in $U(D)$ such that 
\begin{equation*}
\lim_{\nu \rightarrow \infty} U \xi_{\nu} = U \xi \, \, , \, \, 
\lim_{\nu \rightarrow \infty} U A U^{-1} U \xi_{\nu} = U A U^{-1} 
U \xi \, \, .
\end{equation*}
Therefore, $U(D)$ is a core for $U A U^{-1}$. Finally, if $A$ is positive,
it follows for $\xi \in D(A)$ that 
\begin{equation*}
\braket{U \xi | (U \circ A \circ U^{-1}) U \xi} =
\braket{U \xi | U A \xi} = \braket{\xi | A \xi} \geq 0 
\end{equation*}
and hence also the positivity of $U A U^{-1}$.
\end{proof}

\pagebreak

\end{document}